\newcommand{\blind}{0}
\newtheorem{thm}{Theorem}
\newtheorem{lemma}{Lemma}
\begin{document}

\def\spacingset#1{\renewcommand{\baselinestretch}%
{#1}\small\normalsize} \spacingset{1}


\if0\blind
{
  \title{\bf  An Effective Multivariate Normality Test via  Hessians of Empirical Cumulant Generating Functions}
 \author{Kwun Chuen Gary Chan
	\\
	Department of Biostatistics, University of Washington,\\
	\\
	Hok Kan Ling\\
	Department of Mathematics and Statistics, Queen's University\\
	\\
	Chuan-Fa Tang\\
	Mathematical Sciences, The University of Texas at Dallas\\
	\\
	Sheung Chi Phillip Yam\\
	Department of Statistics, The Chinese University of Hong Kong
}
\date{}
  \maketitle
} \fi

\if1\blind
{
  \bigskip
  \bigskip
  \bigskip
  \begin{center}
    {\LARGE\bf An Effective Multivariate Normality Test via  Hessians of Empirical Cumulant Generating Functions}
\end{center}
  \medskip
} \fi

\bigskip
\begin{abstract}
In this article, we propose a new class of consistent tests for $p$-variate normality. These tests are based on the characterization of the standard multivariate normal distribution, that the Hessian of the corresponding cumulant generating function is identical to the $p\times p$ identity matrix and the idea of decomposing the information from the joint distribution into the dependence copula and all marginal distributions. Under the null hypothesis of multivariate normality, our proposed test statistic is independent of the unknown mean vector and covariance matrix so that the distribution-free critical value of the test can be obtained by Monte Carlo simulation. We also derive the asymptotic null distribution of proposed test statistic and establish the consistency of the test against different fixed alternatives. Last but not least, a comprehensive and extensive Monte Carlo study also illustrates that our test is a superb yet computationally convenient competitor to many well-known existing test statistics. 
\end{abstract}

\noindent%
{\it Keywords:}  
moment generating function; consistency; goodness-of-fit test; $L^2$-statistic; empirical cumulant generating function

\vfill

\newpage
\spacingset{2} 
\section{Introduction}\label{sec:intro}
The normal distribution is  certainly one of the most important distributions in statistics that underlie many statistical procedures. Therefore, validation of the normality assumption behind the data is of fundamental importance and interest; to this end, there are a significant number of tests for both univariate and multivariate normality in the literature. For instance, \cite{mecklin2004appraisal} surveyed dozens of common multivariate normality tests and divided most of the tests into four categories: (i) procedures based on graphical plots and correlation coefficients; (ii) goodness-of-fit tests; (iii) test based on measures of skewness and kurtosis; (iv) and tests based on the empirical characteristic functions. Furthermore, there is still an ongoing research devoted in developing new normality tests. For example, \cite{ebner2020tests} reviewed the recent developments in tests for multivariate normality with an emphasis on asymptotic properties of several classes of weighted $L^2$-statistics, where these statistics are mostly based on empirical moment generating functions or empirical characteristic functions. 

%


In this article, we introduce a novel test for multivariate normality based on a system of second order partial derivatives of the cumulant generating function and decomposing a joint distribution into the dependence copula and all marginal distributions. 
Let $X$ be a random vector on $\mathbb{R}^p$. Denote $\mathcal{N}_p$ to be the class of all non-degenerate $p$-variate normal distributions. Suppose that we observe a random sample of $X_1,\ldots,X_n$ having the same joint distribution as that of $X$. We consider the traditional and conventional problem of testing the null hypothesis:
\begin{equation*}
	H_0: \text{The law of $X$ belongs to } \mathcal{N}_p,
\end{equation*}
against the general alternative hypothesis:
\begin{equation*}
	H_1: \text{The law of $X$ does not belong to } \mathcal{N}_p.
\end{equation*}
Let $\overline{X}_n := n^{-1}\sum^n_{i=1}X_i$ be the sample mean and $S_n := n^{-1}\sum^n_{i=1}(X_i- \overline{X}_n)(X_i-\overline{X}_n)^\top$ be the (biased) sample covariance matrix. Let $S^{-1/2}_n$ denote the unique symmetric square root of $S^{-1}_n$. For the rest of this article, we assume that $n \geq p + 1$. 
Due to the absolute continuity of the multivariate normal distribution, $S^{-1}_n$ exists with probability one under $H_0$, see for example \cite{eaton1973non}. Therefore, under the alternative hypothesis, if $S_n$ is not invertible, we can simply reject the null hypothesis. Thus, from now on we assume $S^{-1}_n$ exists.
It is known that, under the null hypothesis, the distribution of the so-called scaled residuals $Z_{n,i} := S^{-1/2}_n(X_i - \overline{X}_n)$, $i=1,\ldots,n$, does not depend on the mean vector $\mu$ and covariance matrix $\Sigma$ of $X$; also see \cite{szkutnik2021comment} for details. In other words, the distribution of the scaled residuals when  $X$ is from any non-degenerate multivariate normal distribution is the same as that when $X \sim N_p(0, \textbf{I}_p)$. Thus, for test statistics that only involve the scaled residuals $Z_{n,i}$'s, under the null hypothesis, it suffices to consider the case $X\sim N_p(0, \textbf{I}_p)$, where $\textbf{I}_p$ denotes the $p\times p$ identity matrix.
Denote $M(t) := \mathbb{E}(e^{t^\top X})$ to be the moment generating function of $X$. The cumulant generating function of $X$ is also defined by $\Lambda(t) := \log M(t)$, for $t$ in the proximity of origin. Our proposed test relies on the following key observation: if $X \sim N_p(0, \textbf{I}_p)$, then
\begin{equation}\label{eq:normal_cgf_Hess}
	H_\Lambda (t) = \textbf{I}_p,\quad t \in \mathbb{R}^p,
\end{equation}
where $H_f$ denotes the Hessian matrix of a function $f$; clearly,  the converse is also true, namely if the Hessian of the cumulant generating function of a random vector is identically equal to $\textbf{I}_p$, then the random vector has a standard multivariate normal distribution.

In general, a random vector can deviate from the multivariate normality from two causes: (i) at least one marginal distribution of a component random variable is non-normal; or (ii) the copula structure of the random vector is not the Gaussian one; see for example \cite{nelsen2007introduction}.
 In fact, practitioners often look at univariate normal Q-Q plots, P-P plots (\cite{gan1990goodness}) and bivariate plots of the marginals taken any two at a time, by performing univariate tests on each of the marginal distribution as well as performing tests based on dimension reduction (e.g., a test of the squared radii $Z_{n,i}^T Z_{n,i}$'s); see Chapter 9 in \cite{thode2002testing}. These create multiple testing issues, and using Bonferroni rule to combine them as an adjustment can be conservative.
Motivated by the fact that a multivariate distribution can be determined by specifying its corresponding marginal distributions and dependence structure, for any  $t = (t_1,\ldots,t_p)^\top \in \mathbb{R}^p$, define the function $M^*(\cdot)$ via the decomposition
\begin{equation*}
	M(t) \equiv \left( \prod^p_{j=1} M_j(t_j) \right) M^*(t),
\end{equation*}
where $M_{j}(\cdot)$ is the moment generating function of the $j$th component of $X$. 
Further, if $M$ is twice differentiable, then
\begin{equation}\label{eq:H_log_M_star}
	H_{\log M^*}(t) \equiv H_{\log M}(t) - \sum^p_{j=1} H_{\log M_j}(t_j).
\end{equation}
Denote $\Lambda_j := \log M_j$ and $\Lambda^* := \log M^*$. We rewrite (\ref{eq:H_log_M_star}) as 
\begin{equation}\label{eq:H_star_lambda_D}
	H_{\Lambda^*}(t) \equiv H_{\Lambda}(t) - D(t),
\end{equation}
where $D(t)$ is a $p \times p $ diagonal matrix with elements $H_{\Lambda_1}(t_1),\ldots,H_{\Lambda_p}(t_p)$ in order.
Under the null hypothesis $H_0$, by (\ref{eq:normal_cgf_Hess}),
\begin{equation}\label{eq:decomposition}
	H_{\Lambda^*}(t) = \textbf{0}_{p\times p} \text{ and } D(t) = \textbf{I}_p, \text{ for all } t \in \mathbb{R}^p,
\end{equation}
where $\textbf{0}_{p\times p}$ is the $p \times p$ zero matrix. Our proposed test is based on an empirical version of the two identities of (\ref{eq:decomposition}).

Recall that under $H_0$, the scaled residuals $Z_{n,i}$'s are independent of the unknown $\mu$ and $\Sigma$, and they will resemble a random sample from $N_p(0, \textbf{I}_p)$. Let $M^{(n)}(t) := \frac{1}{n}\sum^n_{i=1} e^{t^\top Z_{n,i}}$, $	\triangledown M^{(n)}(t) := \frac{1}{n}\sum^n_{i=1} Z_i e^{t^\top Z_{n,i}}$ and $	H_{M^{(n)}}(t) := \frac{1}{n}\sum^n_{i=1} Z_{n,i} Z_{n,i}^\top e^{t^\top Z_{n,i}}$ be the empirical version of the unknown theoretical moment generating function, its gradient and Hessian based on the scaled residuals, respectively.
An empirical version $H_{\Lambda^{(n)}}$ of $H_{\Lambda}$ using the scaled residuals is then given by
\begin{align}\label{eq:H_Lambda_n}
	H_{\Lambda^{(n)}}(t) := \frac{M^{(n)}(t) H_{M^{(n)}}(t) - \triangledown M^{(n)}(t)(\triangledown M^{(n)}(t))^\top}{M^{(n)}(t)^2},
\end{align}
where $\Lambda^{(n)} := \log M^{(n)}$. The corresponding empirical version of $H_{\Lambda^*_n}$ is 
\begin{equation*}
	H_{(\Lambda^{(n)})^*}(t) = H_{\Lambda^{(n)}}(t) - D^{(n)}(t),
\end{equation*}
where $D^{(n)}(t)$ is the $p \times p$ diagonal matrix with elements $H_{\Lambda^{(n)},11}(s_1),\ldots, H_{\Lambda^{(n)},pp}(s_p)$. Here, $s_j = (0,\ldots,0,t_j,0,\ldots,0)^\top$, where $t_j$ is in the $j$th position of this $p$-vector. 
In view of (\ref{eq:decomposition}), a large value of the $L^2$-statistics $n\int \cdots \int_{\mathbb{R}^p} \sum_{i=1}^p \sum^p_{j > i} H^2_{(\Lambda^{(n)})^*, ij}(t) dt$ or $n\int \cdots \int_{\mathbb{R}^p} \sum^p_{i=1} (D^{(n)}_{ii}(t) -1)^2 dt$ is an evidence against $H_0$.
Since these two integrals do not generally admit a closed form expression because of the denominator $M^{(n)}(t)$ in (\ref{eq:H_Lambda_n}), we consider their discretized approximations and define the following statistics:
\begin{align}\label{eq:Hn_Dn_t}
	H^{(n)}_{N} = n\sum^N_{l=1} \sum^p_{i=1}\sum^p_{j>i} H^2_{(\Lambda^{(n)})^*, ij}(t_l) \quad \text{ and } \quad 
	D^{(n)}_{N} = n\sum^N_{l=1}\sum^p_{i=1} (D^{(n)}_{ii}(t_l)-1)^2,
\end{align}
where $\{t_1,\ldots,t_N\}$ is a collection of vectors in $\mathbb{R}^p$. For example, we may choose them randomly in the ball $\{t \in \mathbb{R}^p : \|t \| \leq R \}$ for some $R > 0$ and $\|\cdot\|$ is the usual Euclidean norm. Our simulation studies show that $R = 3$ will work well in most scenarios. To see why a moderate value (like $3$) can be a good choice, we can write (\ref{eq:H_Lambda_n}) as
\begin{equation}\label{eq:alternative_weight}
	H_{\Lambda^{(n)}}(t) = \sum^n_{i=1} Z_{n,i} Z_{n,i}^\top W_{n,i}(t) - \left(\sum^n_{i=1}Z_{n,i} W_{n,i}\right) \left(\sum^n_{i=1}Z_{n,i} W_{n,i}\right)^\top,
\end{equation}
where $W_{n,i}(t) := \frac{e^{t^\top Z_{n,i}}}{\sum^n_{j=1} e^{t^\top Z_{n,j}}}$ and $H_{\Lambda^{(n)}}(t)$ can be interpreted as a weighted estimate of the covariance matrix. Under $H_0$,
 $Z_{n,i}$ behaves like a random vector from $N(0,\textbf{I}_p)$ under $H_0$ so that all the components of $Z_{n,i}$ will be around $-3$ to $3$ most of the time. A larger value of $\|t\|$ tend to put more weights to extreme values of $Z_{n,i}$'s in (\ref{eq:alternative_weight}). Thus, a moderate value of $R$ can avoid $H^{(n)}_N$ and $D^{(n)}_N$ depend heavily on only a few extreme values of $Z_{n,i}$'s. On the other hand, if $R$ is  small, then the weights are more even and less information in the empirical cumulant generating function is used, which may result in a less powerful test (recall that the population version identities in (\ref{eq:decomposition}) hold for all values of $t \in \mathbb{R}^p$ under $H_0$).
 
We can interpret $H^{(n)}_{N}$ as capturing the overall deviation from multivariate normality dependence structure while $D^{(n)}_{N}$ focuses on the deviation from marginal univariate normality. Note that $D^{(n)}_N$ is still a function of the scaled residuals, so that it does not depend on the mean and covariance matrix of the normal distribution under $H_0$.
The computation of $H^{(n)}_N$ and $D^{(n)}_N$ is straightforward as we essentially only have to compute the scaled residuals and the empirical Hessian of the cumulant generating function given in (\ref{eq:H_Lambda_n}) at different values of $t$.
Since the magnitudes of $H^{(n)}_N$ and $D^{(n)}_N$ are different, to combine evaluative assessment  $H^{(n)}_N$ and $D^{(n)}_N$, we define our proposed test statistic to be
\begin{equation}
T^{(n)}_N = \max\bigg\{ \frac{H^{(n)}_{N} - \mathbb{E}^S( H^{(n)}_{N})}{SD^S( H^{(n)}_{N}) },   
	\frac{D^{(n)}_{N} - \mathbb{E}^S( D^{(n)}_{N})}{SD^S( D^{(n)}_{N}) }\bigg\},
\end{equation}
such that we reject $H_0$ for large values of $T^{(n)}_N$. Here, the superscript ``$S$" refers to the calculations based on $S$ simulations: $\mathbb{E}^S(H^{(n)}_N) := \frac{1}{S}\sum^S_{s=1} H^{(n)}_{N,s}$, $SD^S(H^{(n)}_N) := \sqrt{ \frac{1}{S-1}(\sum^S_{s=1}H^{(n)}_{N,s} -  \mathbb{E}^S(H^{(n)}_N))^2}$, and $H^{(n)}_{N,s}$ is computed as in $H^{(n)}_N$ but using the scaled residuals from the $s$-th random sample of $X_{1,s},\ldots,X_{n,s} \stackrel{i.i.d.}{\sim} N(0, \textbf{I}_p)$; that is, $\mathbb{E}^S(H^{(n)}_N)$ and $SD^S(H^{(n)}_N)$ are the estimated mean and standard deviation of $H^{(n)}_N$  from $S$ independent Monte Carlo simulations. Similar definitions and interpretation apply to $\mathbb{E}^S(D^{(n)}_N)$ and $SD^S(D^{(n)}_N)$. Since we shall find the critical values of the test statistic using simulation, these estimates can be obtained as a byproduct at the same time.

For testing univariate normality, we base on the fact that $\Lambda''(t) = 1$ for all $t \in \mathbb{R}$ if $X \sim N(0, 1)$, and define its empirical and discrete approximate:
\begin{equation}\label{eq:univariate_TS}
	U^{(n)}_N := \sum^N_{l=1}((\Lambda^{(n)})''(t_l) - 1)^2,
\end{equation}
where for some $t_1,\ldots,t_N \in \mathbb{R}$,
\begin{equation*}
	(\Lambda^{(n)})''(t) := \frac{M^{(n)}(t)(M^{(n)})''(t) - (M^{(n)})'(t))^2}{(M^{(n)})^2(t)},
\end{equation*}
such that we reject $H_0$ for large values of $U^{(n)}_N$.
Under the null hypothesis, $T^{(n)}_N$ is independent of the unknown mean vector $\mu$ and covariance matrix $\Sigma$ for the multivariate case, and $U^{(n)}_N$ is independent of the unknown mean and variance for the univariate case. 
In Section \ref{sect:consistency}, we shall further show that our test is consistent especially when the moment generating function exists and is twice differentiable. In general, one cannot find a test that is uniformly powerful than all the other tests against all alternatives. In Section \ref{sect:simulation}, through an extensive simulation study, we see that our proposed test often has higher powers over various alternative distributions compared with some prevalent common tests as well as some recently formulated tests. In particular, for the univariate case, our test performs the best in various short-tailed symmetric and asymmetric distributions compared with other common normality test, including the well-known and powerful Shapiro-Wilk test (\cite{shapiro1965analysis}).  For the multivariate case, our proposed test outperforms other common tests in many of the short-tailed distributions and some of the long-tailed distributions. For other distributions, the proposed test tends to have powers in between those of different tests.

The organization of the paper is as follows. In Section \ref{sect:asy_null_dist}, we derive the asymptotic distribution of the test statistic $T^{(n)}_N$ under the null hypothesis $H_0$. Consistency of the newly proposed test statistic will be established in Section \ref{sect:consistency}. We provide a comprehensive Monte Carlo simulation study in Section \ref{sect:simulation}. Discussion and conclusion are given in Section  \ref{sect:conclusion}.
Technical proofs and additional simulation results are appended in the section of supplementary materials.

\section{Asymptotic Null Distribution}\label{sect:asy_null_dist}
In this section, we derive the asymptotic distribution of the test statistic $T^{(n)}_N$ under the null hypothesis that each sample $X_i$ has a nondegenerate multivariate normal distribution. Recall that our proposed test statistic is independent of the mean and covariance matrix of $X$. To derive the asymptotic null distribution, it therefore suffices to consider the case when $\mathbb{E}(X) = 0$ and $Var(X) = \textbf{I}_p$.

Let $M^{(n)}_0(t) := \frac{1}{n}\sum^n_{i=1} e^{t^\top X_i}$, $\triangledown M^{(n)}_0(t) := \frac{1}{n}\sum^n_{i=1} X_i e^{t^\top X_i}$ and $H_{M^{(n)}_0}(t) := \frac{1}{n}\sum^n_{i=1}X_i X_i^\top e^{t^\top X_i}$ be the empirical moment generating function, its gradient and Hessian based on $X_i$'s (not the scaled residuals $Z_{n,i}$'s), respectively. The corresponding population versions are $M_0(t) = e^{\|t\|^2/2}$, $\triangledown M_0(t) = e^{\|t\|^2/2} t$ and $H_{M_0}(t) = e^{\|t\|^2/2}(tt^\top +  \textbf{I}_p)$. Define $f_0:\mathbb{R}^p \times \mathbb{R}^p \rightarrow \mathbb{R}$, $f_1: \mathbb{R}^p \times \mathbb{R}^p \rightarrow \mathbb{R}^p$ and $f_2:\mathbb{R}^p \times \mathbb{R}^p \rightarrow \mathbb{R}^{p \times p}$ one by one as follows:
\begin{align*}
	f_0(x;t) &:= e^{\|t\|^2/2} \left\{-t^\top x - \frac{(t^\top x)^2}{2} + \frac{\|t\|^2}{2}\right\}; \\
	f_1(x;t) &:= e^{\|t\|^2/2} \left\{- (\textbf{I}_p + tt^\top)x -  \frac{1}{2} (2 \textbf{I}_p + tt^\top)(x x^\top - \textbf{I}_p)t \right\}; \\
	f_2(x;t) &:= e^{\|t\|^2/2} \bigg\{ - \frac{1}{2} (xx^\top - \textbf{I}_p)(\textbf{I}_p + tt^\top)- x t^\top 
	- \frac{1}{2} (\textbf{I}_p + tt^\top) (xx^\top - \textbf{I}_p)- t x^\top 
	\\
	&\quad  - \frac{1}{2} \sum^p_{h=1}\sum^p_{k=1}  \left( \frac{\partial tt^\top}{\partial t_h} + tt^\top t_h + t_h \textbf{I}_p \right) t_k (x_{h} x_{k}- \mathbbm{1}(h=k)) - (\textbf{I}_p+tt^\top)t^\top x  \bigg\},
\end{align*}
where $\mathbbm{1}(\cdot)$ is the indicator function. Since $\mathbb{E}(X) = 0, \mathbb{E}(XX^\top) = \textbf{I}_p$ and $\mathbb{E}((t^\top X)^2) = t^\top \mathbb{E}(XX^\top)t = \|t\|^2$, straightforward calculation shows that $\mathbb{E}(f_j(X_i;t))=0$ for $j=0,1,2$ and any $t \in \mathbb{R}^p$, where the same notation $0$ is adopted as the zero element in the corresponding high dimensional space if there is no cause of ambiguity. Also, let $Q^h:=\frac{\partial tt^\top}{\partial t_h}$. $Q^h$ is a matrix of $0$'s except for the $h$th row and $h$th column, where the $(h,j)$ element, $Q^h_{hj} = t_j$ for $j=1,\ldots,h-1,h+1,\ldots,p$; while the $(i,h)$ element, $Q^h_{ih}=t_i$ for $i=1,\ldots,h-1,h+1,\ldots,p$, and $Q^h_{hh} = 2 t_h$.
The following lemma first gives the approximation errors due to the use of scaled residuals in the empirical version of the moment generating function and its gradient and Hessian. 
\begin{lemma}\label{lemma:Mn_Mn0}
	Under $H_0$, for any $t \in \mathbb{R}^p$, we have
	\begin{enumerate}[(a)]
		\item $M^{(n)}(t) - M^{(n)}_0(t) = n^{-1}\sum^n_{i=1}f_0(X_i;t) + o_p(n^{-1/2})$;
		
		\item $\triangledown M^{(n)}(t) - \triangledown M^{(n)}_0(t) = n^{-1} \sum^n_{i=1}f_1(X_i;t) + o_p(n^{-1/2})$;
		
		\item  $H_{M^{(n)}}(t) - H_{M^{(n)}_0}(t) = n^{-1}\sum^n_{i=1}f_2(X_i;t) + o_p(n^{-1/2})$.
		
	\end{enumerate}
\end{lemma}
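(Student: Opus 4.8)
The plan is to treat the passage from the raw data $X_i$ to the scaled residuals $Z_{n,i}=S_n^{-1/2}(X_i-\overline X_n)$ as a small perturbation and to expand everything in powers of the two ``errors'' $\overline X_n=n^{-1}\sum_i X_i$ and $E_n:=S_n-\mathbf I_p$, both of which are $O_p(n^{-1/2})$ under $H_0$ when $\mathbb E(X)=0$ and $\operatorname{Var}(X)=\mathbf I_p$. First I would record two elementary facts: (i) $E_n=n^{-1}\sum_{i=1}^n(X_iX_i^\top-\mathbf I_p)+O_p(n^{-1})$, since $S_n=n^{-1}\sum_i X_iX_i^\top-\overline X_n\overline X_n^\top$ and $\overline X_n\overline X_n^\top=O_p(n^{-1})$; and (ii) $S_n^{-1/2}=\mathbf I_p-\tfrac12 E_n+R_n$ with $\|R_n\|=O_p(n^{-1})$, from the spectral expansion $(1+\lambda)^{-1/2}=1-\tfrac12\lambda+O(\lambda^2)$ applied to the eigenvalues of $E_n$ (valid for large $n$ because $E_n\to 0$, so $S_n$ is invertible and close to $\mathbf I_p$). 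Consequently $t^\top Z_{n,i}=t^\top X_i+\delta_i$ with $\delta_i=-t^\top\overline X_n-\tfrac12 t^\top E_n X_i+r_i$, where $|r_i|\le C\|R_n\|(1+\|X_i\|)$, and similarly $Z_{n,i}=X_i-\overline X_n-\tfrac12 E_nX_i$ plus a remainder of order $\|R_n\|(1+\|X_i\|)$.

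For part (a) I would write $M^{(n)}(t)-M^{(n)}_0(t)=n^{-1}\sum_i e^{t^\top X_i}(e^{\delta_i}-1)$ and Taylor expand: $e^{\delta_i}-1=\delta_i+O(\delta_i^2 e^{|\delta_i|})$. The quadratic remainder contributes $n^{-1}\sum_i e^{t^\top X_i}O(\delta_i^2 e^{|\delta_i|})=O_p(n^{-1})$, using $|\delta_i|\le c_n(1+\|X_i\|)$ with $c_n=O_p(n^{-1/2})$ together with the law of large numbers applied to $n^{-1}\sum_i(1+\|X_i\|)^2 e^{t^\top X_i+\varepsilon\|X_i\|}$ (finite in mean by the Gaussian moment generating function). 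For the linear term, $n^{-1}\sum_i e^{t^\top X_i}\delta_i=-(t^\top\overline X_n)M^{(n)}_0(t)-\tfrac12 t^\top E_n\,\triangledown M^{(n)}_0(t)+o_p(n^{-1/2})$, the $r_i$-part being $O_p(n^{-1})$ by the same bound. Replacing $M^{(n)}_0(t)\to e^{\|t\|^2/2}$ and $\triangledown M^{(n)}_0(t)\to e^{\|t\|^2/2}t$ (errors $o_p(1)$ multiplied by $O_p(n^{-1/2})$ factors, hence negligible), then substituting $t^\top\overline X_n=n^{-1}\sum_i t^\top X_i$ and $t^\top E_n t=n^{-1}\sum_i\{(t^\top X_i)^2-\|t\|^2\}+O_p(n^{-1})$, yields exactly $n^{-1}\sum_i f_0(X_i;t)+o_p(n^{-1/2})$.

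Parts (b) and (c) follow the same three moves: expand $Z_{n,i}$ (resp.\ $Z_{n,i}Z_{n,i}^\top$) and $e^{t^\top Z_{n,i}}$, multiply out keeping only the first-order terms in $(\overline X_n,E_n)$, and replace empirical moment functionals by their population limits $M_0$, $\triangledown M_0$, $H_{M_0}$. In (b) the first-order terms are $-\overline X_n M^{(n)}_0-\tfrac12 E_n\triangledown M^{(n)}_0$ from the expansion of $Z_{n,i}$ and $-(t^\top\overline X_n)\triangledown M^{(n)}_0-\tfrac12 H_{M^{(n)}_0}E_n t$ from the expansion of the exponential (using $t^\top E_n X_i=X_i^\top E_n t$); collecting these and using $H_{M_0}(t)=e^{\|t\|^2/2}(\mathbf I_p+tt^\top)$ gives $e^{\|t\|^2/2}\{-(\mathbf I_p+tt^\top)\overline X_n-\tfrac12(2\mathbf I_p+tt^\top)E_n t\}$, which on inserting the averages for $\overline X_n$ and $E_n$ is precisely $n^{-1}\sum_i f_1(X_i;t)$. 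Part (c) is the same computation with one extra ingredient: the term $-\tfrac12 n^{-1}\sum_i X_iX_i^\top(t^\top E_nX_i)e^{t^\top X_i}$ forces a third-moment functional $n^{-1}\sum_i X_iX_i^\top X_{ik}e^{t^\top X_i}$, whose limit is $\partial_{t_k}H_{M_0}(t)=e^{\|t\|^2/2}\{t_k(\mathbf I_p+tt^\top)+Q^k\}$; this is where the matrices $Q^h$ in $f_2$ enter, and after re-indexing and symmetrising in $(h,k)$ one recovers the displayed formula for $f_2$.

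The main obstacle is purely the error control, since the algebra, though lengthy in (c), is mechanical. The delicate point is that $|\delta_i|\le c_n(1+\|X_i\|)$ is \emph{not} uniform in $i$, so one cannot pull $\sup_i|\delta_i|$ out of the sums; instead each remainder must be bounded by a quantity of the form $c_n^2\cdot n^{-1}\sum_i(1+\|X_i\|)^m e^{t^\top X_i+\varepsilon\|X_i\|}$, or $c_n\|\overline X_n\|$ (resp.\ $c_n\|E_n\|$) against an $O_p(1)$ average, and one then invokes the law of large numbers together with the existence of all exponential moments of $\|X\|$ under $N_p(0,\mathbf I_p)$ to conclude it is $O_p(n^{-1})=o_p(n^{-1/2})$. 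The only other thing to verify is that replacing $M^{(n)}_0,\triangledown M^{(n)}_0,H_{M^{(n)}_0}$ and the third-moment average by their population counterparts is legitimate, which holds because each such replacement is an $o_p(1)$ error multiplied by an $O_p(n^{-1/2})$ coefficient ($\overline X_n$ or $E_n$). Since $\mathbb E(f_j(X_i;t))=0$ (already noted in the text), the stated leading terms are genuinely $O_p(n^{-1/2})$ and the expansion is nontrivial.
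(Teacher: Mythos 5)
Your proposal is correct and follows essentially the same route as the paper: expand $Z_{n,i}$ around $X_i$ via the first-order expansion $S_n^{-1/2}=\mathbf I_p-\tfrac12\{n^{-1}\sum_i(X_iX_i^\top-\mathbf I_p)\}+O_p(n^{-1})$ (the paper cites this as equation (2.13) of Henze and Wagner 1997), Taylor-expand the exponential, replace the empirical moment functionals by $M_0,\triangledown M_0,H_{M_0}$ and the third-moment limit $\mathbb E(e^{t^\top X}XX^\top X_h)$, and collect the linear terms into $n^{-1}\sum_i f_j(X_i;t)$. The only substantive difference is in controlling the quadratic Taylor remainder: the paper does pull out $\max_i\|\Delta_{n,i}\|=O_p(n^{-1/2}\log n)$ (using $\max_i\|X_i\|=O_p(\log n)$ for Gaussian data), whereas you keep the $i$-dependence inside the average and invoke exponential moments; both arguments are valid.
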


Define $g_1, g_2:\mathbb{R}^p \times \mathbb{R}^p \rightarrow \mathbb{R}^{p \times p}$ by
\begin{align*}
	g_1(x;t) &:=	e^{\|t\|^2/2} \left\{
	e^{t^\top x}(x x^\top + tt^\top - t x^\top - xt^\top - \textbf{I}_p) 	\right\},	\\
	g_2(x;t) &:= e^{\|t\|^2/2} \left\{ f_2(x;t) + (tt^\top + \textbf{I}_p)f_0(x;t) - t f^\top_1(x;t) - f_1(x;t) t^\top - 2 f_0(x;t) \textbf{I}_p \right\}.
\end{align*}
As $\mathbb{E}(e^{t^\top X} XX^\top) = e^{\|t\|/2}(tt^\top + \textbf{I}_p)$ and $\mathbb{E}(e^{t^\top X} X) = t e^{\|t\|^2/2}$, we have $\mathbb{E}(g_1(X_i;t)) = 0$. Also, $\mathbb{E}(g_2(X_i;t)) = 0$ because $\mathbb{E}(f_j(X_i;t))=0$ for $j=0,1,2$.
 To derive the asymptotic distribution of the test statistic $T^{(n)}_N$, we first derive that of $\sqrt{n}(H_{\Lambda^{(n)}}(t) - \textbf{I}_p )$. Since
\begin{equation*}
H_{\Lambda^{(n)}}(t) - \textbf{I}_p = (M^{(n)})^{-2}(t) \cdot \{M^{(n)}(t) H_{M^{(n)}}(t) - \triangledown M^{(n)}(t) (\triangledown M^{(n)}(t))^\top - (M^{(n)})^2(t) \textbf{I}_p \},
\end{equation*}
we establish the following lemma, which write the second term of the right side of the above equation into an average of mean zero terms of independent variable and asymptotic negligible reminder.
\begin{lemma}\label{lemma:Mn_I}
	Under $H_0$, for any $t \in \mathbb{R}^p$, we have
	\begin{enumerate}[(a)]
		\item $M^{(n)}_0(t) H_{M^{(n)}_0}(t) - \triangledown M^{(n)}_0(t) (\triangledown M^{(n)}_0(t))^\top - (M^{(n)}_0)^2(t) \textbf{I}_p = n^{-1}\sum^n_{i=1} g_1(X_i;t) + o_p(n^{-1/2})$;
		
		\item $	M^{(n)}(t) H_{M^{(n)}}(t) - \triangledown M^{(n)}(t) (\triangledown M^{(n)}(t))^\top - (M^{(n)})^2(t) \textbf{I}_p = n^{-1}\sum^n_{i=1} (g_1(X_i;t) +g_2(X_i;t)) + o_p(n^{-1/2})$.

	\end{enumerate}

\end{lemma}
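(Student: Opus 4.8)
The plan is to prove both parts by a first-order delta-method-style expansion of the products of empirical moment generating function quantities around their population counterparts, keeping only the linear terms and discarding quadratic remainders, which are $O_p(n^{-1}) = o_p(n^{-1/2})$, and then verifying two algebraic identities that match the surviving linear contributions with $n^{-1}\sum_i g_1(X_i;t)$ and $n^{-1}\sum_i g_2(X_i;t)$ respectively.

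For part (a), first observe that under $H_0$ with $\mathbb{E}(X)=0$ and $Var(X)=\textbf{I}_p$, the moment generating function $M_0$ is finite on all of $\mathbb{R}^p$, so for each fixed $t$ the variables $e^{t^\top X}$, $Xe^{t^\top X}$ and $XX^\top e^{t^\top X}$ have finite second moments. Hence, by the central limit theorem, $A_n := M^{(n)}_0(t) - M_0(t)$, $B_n := \triangledown M^{(n)}_0(t) - \triangledown M_0(t)$ and $C_n := H_{M^{(n)}_0}(t) - H_{M_0}(t)$ are each $O_p(n^{-1/2})$ and are averages of $n$ i.i.d. mean-zero terms. Substituting $M^{(n)}_0 = M_0 + A_n$, $\triangledown M^{(n)}_0 = \triangledown M_0 + B_n$, $H_{M^{(n)}_0} = H_{M_0} + C_n$ into the left-hand side of (a) and expanding, every product of two of $A_n, B_n, C_n$ is $O_p(n^{-1}) = o_p(n^{-1/2})$. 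The constant contribution $M_0 H_{M_0} - \triangledown M_0 \triangledown M_0^\top - M_0^2\textbf{I}_p$ vanishes identically because $M_0(t)=e^{\|t\|^2/2}$, $\triangledown M_0(t) = e^{\|t\|^2/2}t$ and $H_{M_0}(t) = e^{\|t\|^2/2}(tt^\top + \textbf{I}_p)$. What remains is the linear term $M_0 C_n + A_n H_{M_0} - \triangledown M_0 B_n^\top - B_n \triangledown M_0^\top - 2 M_0 A_n \textbf{I}_p$; writing this as $n^{-1}\sum_i(\cdot)$ and collecting terms, a direct calculation shows that the per-$i$ summand is $e^{\|t\|^2/2}e^{t^\top X_i}(X_iX_i^\top + tt^\top - tX_i^\top - X_it^\top - \textbf{I}_p) = g_1(X_i;t)$, the remaining $e^{\|t\|^2}$-type constant pieces cancelling. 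This proves (a).

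For part (b), apply Lemma~\ref{lemma:Mn_Mn0} to write $M^{(n)}(t) = M^{(n)}_0(t) + \delta_0 + o_p(n^{-1/2})$, $\triangledown M^{(n)}(t) = \triangledown M^{(n)}_0(t) + \delta_1 + o_p(n^{-1/2})$ and $H_{M^{(n)}}(t) = H_{M^{(n)}_0}(t) + \delta_2 + o_p(n^{-1/2})$, where $\delta_j := n^{-1}\sum_i f_j(X_i;t)$ is $O_p(n^{-1/2})$ by the central limit theorem, as $\mathbb{E}(f_j(X_i;t))=0$ and the $f_j$ have finite second moments for fixed $t$. Expanding $M^{(n)}H_{M^{(n)}} - \triangledown M^{(n)}(\triangledown M^{(n)})^\top - (M^{(n)})^2\textbf{I}_p$, every product of two small factors (two of $\delta_0,\delta_1,\delta_2$, or a $\delta_j$ against an $o_p(n^{-1/2})$ remainder) is $o_p(n^{-1/2})$; and since $M^{(n)}_0$ converges in probability to $M_0$ and $\triangledown M^{(n)}_0$ to $\triangledown M_0$, each empirical factor multiplying a $\delta_j$ may be replaced by its deterministic limit at the cost of an extra $o_p(1)\cdot O_p(n^{-1/2}) = o_p(n^{-1/2})$ term. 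Hence the left-hand side of (b) equals
\begin{equation*}
\big[M^{(n)}_0 H_{M^{(n)}_0} - \triangledown M^{(n)}_0(\triangledown M^{(n)}_0)^\top - (M^{(n)}_0)^2\textbf{I}_p\big] + M_0\delta_2 + \delta_0 H_{M_0} - \triangledown M_0 \delta_1^\top - \delta_1 \triangledown M_0^\top - 2 M_0 \delta_0 \textbf{I}_p + o_p(n^{-1/2}).
\end{equation*}
The bracketed quantity is $n^{-1}\sum_i g_1(X_i;t) + o_p(n^{-1/2})$ by part (a). Substituting the explicit forms of $M_0, \triangledown M_0, H_{M_0}$ into the remaining terms and comparing with the definition of $g_2$, the per-$i$ summand is exactly $g_2(X_i;t)$, which gives (b).

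I expect the main obstacle to be the bookkeeping in the two algebraic identities --- tracking precisely which cross-products collapse into $g_1$ and $g_2$, and confirming that the population-level pieces cancel --- combined with the careful Slutsky-type arguments needed to replace the $O_p(n^{-1/2})$ empirical factors inside the cross terms by their deterministic limits with only $o_p(n^{-1/2})$ error. No genuine analytic difficulty arises, since the Gaussian moment generating function is finite everywhere and all required moments exist, so all $O_p(n^{-1/2})$ rates follow immediately from the central limit theorem.
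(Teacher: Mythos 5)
Your proposal is correct and follows essentially the same route as the paper: part (a) is the linearization of the product around the population quantities $M_0$, $\triangledown M_0$, $H_{M_0}$ with the quadratic remainder discarded as $O_p(n^{-1})$ (the paper's $E_{1n}+E_{2n}+E_{3n}$ decomposition is just a regrouping of your constant-plus-linear-plus-quadratic expansion), and part (b) substitutes the Lemma~\ref{lemma:Mn_Mn0} expansions into $F_{1n}-F_{2n}-F_{3n}$ and collects the linear $f_j$-contributions into $g_2$ exactly as you describe. No gap.
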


With Lemmas \ref{lemma:Mn_Mn0} and \ref{lemma:Mn_I}, we have the following theorem.
\begin{thm}\label{thm:H_I}
	Under $H_0$, for any $t \in \mathbb{R}^p$, we have
	\begin{equation*}
		\sqrt{n}(H_{\Lambda^{(n)}}(t) - \textbf{I}_p) = \frac{1}{\sqrt{n}}\sum^n_{i=1} h(X_i;t) + o_p(1),
	\end{equation*}
	where $h(x;t) := e^{-\|t\|} (g_1(x;t) + g_2(x;t))$.
\end{thm}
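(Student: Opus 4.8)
The plan is to read the theorem as a routine consequence of Lemma~\ref{lemma:Mn_I} combined with a law of large numbers for the denominator $(M^{(n)})^{2}(t)$. I would start from the exact algebraic identity displayed immediately before the theorem,
\begin{equation*}
H_{\Lambda^{(n)}}(t) - \textbf{I}_p = (M^{(n)})^{-2}(t)\cdot\Big\{ M^{(n)}(t) H_{M^{(n)}}(t) - \triangledown M^{(n)}(t)(\triangledown M^{(n)}(t))^\top - (M^{(n)})^{2}(t)\,\textbf{I}_p \Big\},
\end{equation*}
which is valid whenever $M^{(n)}(t)>0$, and this always holds since $M^{(n)}(t)$ is an average of exponentials. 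Multiplying by $\sqrt{n}$ and substituting Lemma~\ref{lemma:Mn_I}(b) for the braced factor gives
\begin{equation*}
\sqrt{n}\big(H_{\Lambda^{(n)}}(t) - \textbf{I}_p\big) = (M^{(n)})^{-2}(t)\cdot\Big\{ \tfrac{1}{\sqrt{n}}\sum_{i=1}^{n}\big(g_1(X_i;t)+g_2(X_i;t)\big) + o_p(1)\Big\}.
\end{equation*}

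Next I would pin down the scalar prefactor $(M^{(n)})^{-2}(t)$. By Lemma~\ref{lemma:Mn_Mn0}(a), $M^{(n)}(t) = M^{(n)}_0(t) + n^{-1}\sum_{i=1}^{n} f_0(X_i;t) + o_p(n^{-1/2})$; here $n^{-1}\sum_{i=1}^{n} f_0(X_i;t) = o_p(1)$ since the i.i.d. summands are mean zero with finite variance under $N_p(0,\textbf{I}_p)$, and $M^{(n)}_0(t) = n^{-1}\sum_{i=1}^{n} e^{t^\top X_i}\xrightarrow{p} M_0(t)=e^{\|t\|^2/2}$ by the weak law of large numbers, the mean being finite because $X_i$ is Gaussian. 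Hence $M^{(n)}(t)\xrightarrow{p} e^{\|t\|^2/2}>0$, and the continuous mapping theorem yields $(M^{(n)})^{-2}(t)\xrightarrow{p} e^{-\|t\|^2}$; note that this fixes the exponent appearing in $h$.

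Finally I would assemble the pieces with Slutsky's theorem, working entrywise in the $p\times p$ matrices (legitimate since $p$ is fixed). The matrix $\tfrac{1}{\sqrt{n}}\sum_{i=1}^{n}(g_1(X_i;t)+g_2(X_i;t))$ is $O_p(1)$, since its summands are i.i.d., mean zero (as recorded just after the definitions of $g_1,g_2$), and square-integrable, being polynomials in the entries of $X_i$ multiplied by $e^{t^\top X_i}$, all of which have finite second moments under $N_p(0,\textbf{I}_p)$. Writing $(M^{(n)})^{-2}(t) = e^{-\|t\|^2} + o_p(1)$ and multiplying, the $o_p(1)$ component of the prefactor times the $O_p(1)$ numerator is $o_p(1)$, and the $o_p(1)$ remainder from Lemma~\ref{lemma:Mn_I}(b) times the $O_p(1)$ prefactor stays $o_p(1)$; therefore
\begin{equation*}
\sqrt{n}\big(H_{\Lambda^{(n)}}(t) - \textbf{I}_p\big) = \frac{1}{\sqrt{n}}\sum_{i=1}^{n} e^{-\|t\|^2}\big(g_1(X_i;t)+g_2(X_i;t)\big) + o_p(1) = \frac{1}{\sqrt{n}}\sum_{i=1}^{n} h(X_i;t) + o_p(1),
\end{equation*}
which is the claim. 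The only delicate point is confirming that the $o_p(n^{-1/2})$ remainders supplied by Lemmas~\ref{lemma:Mn_Mn0}--\ref{lemma:Mn_I}, once scaled by $\sqrt{n}$ and multiplied by the stochastically bounded prefactor, collapse to $o_p(1)$; this is immediate from Slutsky once the prefactor's probability limit is in hand, so the theorem carries essentially no content beyond the two lemmas and the law of large numbers.
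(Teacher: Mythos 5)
Your proposal is correct and follows essentially the same route as the paper: both start from the algebraic identity for $H_{\Lambda^{(n)}}(t)-\textbf{I}_p$, substitute Lemma~\ref{lemma:Mn_I}(b) for the numerator, and replace the prefactor $(M^{(n)})^{-2}(t)$ by its probability limit $M_0^{-2}(t)=e^{-\|t\|^2}$, with the difference absorbed as $O_p(n^{-1/2})O_p(1)=o_p(1)$. Your derivation of the prefactor $e^{-\|t\|^2}$ is the correct one (matching $M_0^{-2}(t)$), which indicates that the exponent $e^{-\|t\|}$ in the stated definition of $h$ is a typo; you also supply slightly more justification than the paper for the $O_p(1)$ bound on the centered sum, which is a welcome addition rather than a deviation.
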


Consider $t_1,\ldots,t_N \in \mathbb{R}^p$ in (\ref{eq:Hn_Dn_t}). For a $p\times p$ matrix $A = (a_{ij})$, we denote $A^v = (a_{11},\ldots,a_{1p},a_{22},\ldots,a_{2p},\ldots,a_{pp})$ to be the vectorization of the upper triangular part of $A$. 
By Theorem \ref{thm:H_I}, the multivariate central limit theorem and Slutsky's theorem, we have
\begin{equation*}
	(	\sqrt{n}(H^v_{\Lambda^{(n)}}(t_1) - \textbf{I}^v_p),\ldots,	\sqrt{n}(H^v_{\Lambda^{(n)}}(t_N) - \textbf{I}^v_p))^\top \stackrel{d}{\rightarrow} N_{p(p+1)N/2}(0, \Sigma_N),
\end{equation*}
where the $(i,j)$-th element of $\Sigma_{N}$ is $\text{Cov}(h^v_{i'}(X;t_l), h^v_{j'}(X;t_{l'}))$, for $i= (l-1) p(p+1)/2  + i', j= (l'-1) p(p+1)/2 + j', l,l' = 1,\ldots,N, i',j' = 1,\ldots,p(p+1)/2$.  By the continuous mapping theorem, we can immediately obtain the limiting distributions of $H^{(n)}_N$, $D^{(n)}_N$, and $T^{(n)}_N$ respectively of, denoted by, $H_N, D_N$ and $T_N$, where
		\begin{equation*}
	T_N := \max  \left\{ 
	\frac{H_N - \mathbb{E}^S(H_N)}{SD^S(H_{N})},
	\frac{D_N - \mathbb{E}^S(D_N)}{SD^S(D_N)}
	\right\},
\end{equation*}
with $\mathbb{E}^S(H_N) := \frac{1}{S}\sum^S_{s=1}H_{N, s}$, $SD^S(H_N) := \sqrt{\frac{1}{S-1}\sum^S_{s=1}(H_{N,s} - \mathbb{E}^S(H_N))}$, and $H_{N,1},\ldots,H_{N,S}$ is a random sample having the same distribution as $H_N$. Similar definition applies to $\mathbb{E}^S(D_N)$ and $SD^S(D_N)$. We do not try to simplify the asymptotic null distribution $T_N$ further because the critical value of the test statistic at any finite sample size can be approximated by simulation.
%
%

\section{Consistency}\label{sect:consistency}
While the scaled residuals $Z_{n,i}$ of a multivariate normal with a general covariance matrix are equal in distribution as the scaled residuals of a standard multivariate normal distribution, such a nice property does not hold with $X$ is in the alternative hypothesis.  To study the convergence of $H_N^{(n)}$ and $D_N^{(n)}$ for a general $X$ with finite second moments, we can assume $E(X)=0$ without loss of generality, and $Var(X)=\Sigma$. Denote $\tilde{X} := \Sigma^{-1/2} X$, $\tilde{M}(t) = \mathbb{E}(e^{t^\top \tilde{X}})$ and $\tilde{\Lambda}(t) := \log \tilde{M}(t)$.
The following theorem establishes the strong limits of $U^{(n)}_N$ for the univariate case, and that of $H^{(n)}_N$ and $D^{(n)}_N$ for the multivariate case, altogether imply the consistency of our test.
\begin{thm}\label{thm:consistency_limit}
	Suppose that the moment generating function of $X$ exists and is twice differentiable. 
	\begin{enumerate}[(a)]
		\item If $X$ is univariate, with probability $1$,
		\begin{equation*}
			\lim_{n \rightarrow \infty} \frac{U^{(n)}_N}{n} = \sum^N_{l=1}(\tilde{\Lambda}''(t_l) - 1)^2.
		\end{equation*}
	\item If $X$ is multivariate, with probability $1$,
	\begin{align*}
		\lim_{n \rightarrow \infty} \frac{H^{(n)}_N}{n} &= \sum^N_{l=1} \sum^p_{i=1}\sum^p_{j > i} (H_{\tilde{\Lambda}^*,ij}(t_l))^2 ,\\
		\lim_{n \rightarrow \infty} \frac{D^{(n)}_N}{n} &= \sum^N_{l=1} \sum^p_{i=1} ( \tilde{D}_{ii}(t_{li}) - 1)^2,
	\end{align*}
where $H_{\tilde{\Lambda}^*}$ and $\tilde{D}_{ii}$ are defined similarly as in (\ref{eq:H_star_lambda_D}).
		\end{enumerate}
\end{thm}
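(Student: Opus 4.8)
The strategy is to establish, with probability one, the convergence of the empirical moment generating function of the scaled residuals and of its first two derivatives to $\tilde{M}$, $\triangledown\tilde{M}$ and $H_{\tilde{M}}$, and then to propagate this through the algebraic identities defining $H_{\Lambda^{(n)}}$, $D^{(n)}$ and $H_{(\Lambda^{(n)})^*}$ by the continuous mapping theorem. As a preliminary, since $\mathbb{E}(X)=0$ and $Var(X)=\Sigma$, the classical strong law of large numbers gives $\overline{X}_n\to 0$ and $S_n\to\Sigma$ almost surely, hence $S^{-1/2}_n\to\Sigma^{-1/2}$ by continuity of matrix inversion and of the symmetric square root on the cone of positive definite matrices; write $A_n:=S^{-1/2}_n\to A:=\Sigma^{-1/2}$.

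The key lemma is a uniform strong law. Because the moment generating function $M$ of $X$ exists, it is finite and smooth on an open set $U$ containing the origin and all the points $A\,t_l$ and the axis points they determine (at which we shall need it); consequently the random functions $\theta\mapsto\frac1n\sum_{i=1}^n e^{\theta^\top X_i}$ are convex, finite for $\theta$ in some compact neighbourhood $K\subset U$ of those points, and converge pointwise on a dense subset of $K$ to the convex function $M$ by the pointwise SLLN. Invoking the standard fact that pointwise convergence of convex functions on a dense subset of an open set entails uniform convergence on compact subsets together with convergence of gradients wherever the limit is differentiable, we conclude that, almost surely, $\frac1n\sum_i e^{\theta^\top X_i}$, $\frac1n\sum_i X_i e^{\theta^\top X_i}$ and $\frac1n\sum_i X_i X_i^\top e^{\theta^\top X_i}$ converge uniformly over $\theta\in K$ to $M(\theta)$, $\triangledown M(\theta)$ and $H_M(\theta)$ respectively (differentiation under the expectation being valid in the interior of the domain of an MGF). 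A direct bracketing argument with the integrable envelope $\sup_{\theta\in K}e^{\theta^\top x}$ is an equally valid route.

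I would then transfer this to the scaled residuals. Fix one $t_l$, denote it $t$, and set $\theta_n:=A_n t\to A t=:\theta^*\in U$; on the probability-one event above, $\theta_n\in K$ for all large $n$. Writing $Z_{n,i}=A_n(X_i-\overline{X}_n)$ and factoring out $e^{-t^\top A_n\overline{X}_n}$, the quantities $M^{(n)}(t)$, $\triangledown M^{(n)}(t)$ and $H_{M^{(n)}}(t)$ become linear combinations — with coefficients assembled from $A_n\to A$ and $\overline{X}_n\to 0$ — of the three averages of the previous paragraph evaluated at the convergent argument $\theta_n$. Since $e^{-t^\top A_n\overline{X}_n}\to 1$ and $A_n\overline{X}_n\to 0$, the uniform convergence yields, almost surely,
\begin{align*}
M^{(n)}(t)&\to M(\theta^*)=\tilde{M}(t),\\
\triangledown M^{(n)}(t)&\to A\,\triangledown M(\theta^*)=\triangledown\tilde{M}(t),\\
H_{M^{(n)}}(t)&\to A\,H_M(\theta^*)\,A=H_{\tilde{M}}(t),
\end{align*}
the last equalities being the chain rule for $\tilde{M}(t)=M(At)$. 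As $\tilde{M}(t)>0$, the quotient in (\ref{eq:H_Lambda_n}) and the continuous mapping theorem then give $H_{\Lambda^{(n)}}(t)\to H_{\tilde{\Lambda}}(t)$ a.s.; applied at $t=t_l$ and at the corresponding axis points this gives $D^{(n)}(t_l)\to\tilde{D}(t_l)$ and $H_{(\Lambda^{(n)})^*}(t_l)\to H_{\tilde{\Lambda}^*}(t_l)$ a.s. Summing the finitely many continuous functionals in (\ref{eq:Hn_Dn_t}) delivers the asserted limits of $H^{(n)}_N/n$ and $D^{(n)}_N/n$, and the univariate identity is the case $p=1$, obtained directly from $(\Lambda^{(n)})''(t_l)\to\tilde{\Lambda}''(t_l)$ a.s.

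The main obstacle is the interplay of the last two steps rather than any isolated estimate: the empirical moment generating function of the scaled residuals is not an i.i.d.\ average, so the pointwise SLLN does not apply to it directly, and one must ensure that the uniform-in-$\theta$ convergence and the convergence $\theta_n\to\theta^*$ hold on a common probability-one event with $\theta_n$ eventually trapped in the compact set $K$. Proving the uniform strong law itself — via convexity, or via domination exploiting the finiteness of the MGF on an open set — is the principal quantitative ingredient; everything after it is a routine application of the chain rule and the continuous mapping theorem.
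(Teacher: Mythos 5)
Your proposal is correct, but it reaches the conclusion by a genuinely different route from the paper. The paper never reparametrizes the exponent: it compares $Z_{n,i}$ directly with $\tilde{X}_i=\Sigma^{-1/2}X_i$ through $\tilde{\Delta}_{n,i}=Z_{n,i}-\tilde{X}_i$, proves $\max_i\|\tilde{\Delta}_{n,i}\|\to 0$ a.s.\ via a quantitative chain (Kolmogorov's variance criterion giving $n^{1/2-\varepsilon}\|S_n-\Sigma\|_2\to 0$, a matrix square-root rate lemma transferring this to $S_n^{-1/2}-\Sigma^{-1/2}$, and the $\mathbb{E}\|X\|^4<\infty$ bound $n^{-1/4}\max_i\|X_i\|\to 0$), Taylor-expands $e^{t^\top\tilde{\Delta}_{n,i}}$, and shows the scaled-residual empirical MGF and its derivatives differ from the i.i.d.\ averages over the $\tilde{X}_i$ by terms vanishing a.s.; the pointwise SLLN then finishes via a three-term decomposition of $H_{\Lambda^{(n)}}(t)-\textbf{I}_p$. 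You instead write $e^{t^\top Z_{n,i}}=e^{-t^\top S_n^{-1/2}\overline{X}_n}e^{(S_n^{-1/2}t)^\top X_i}$ and invoke a uniform strong law for the empirical MGF of the raw $X_i$'s over a compact neighbourhood of $\Sigma^{-1/2}t$, evaluated at the convergent argument $S_n^{-1/2}t$. Your route dispenses entirely with rates, with the control of $\max_i\|X_i\|$, and with the square-root rate lemma, at the price of needing a ULLN with an integrable envelope, which requires the MGF to be finite on an open set containing the points $\Sigma^{-1/2}t_l$ (marginally more than pointwise existence, but consistent with the theorem's hypothesis and with what the paper itself implicitly uses). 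One caution: the convexity argument you cite yields locally uniform convergence of the empirical MGF and of its gradient, but not of the Hessian average $\frac{1}{n}\sum_i X_iX_i^\top e^{\theta^\top X_i}$, so for that term you must rely on your alternative envelope/domination route (the envelope $\|x\|^2\sup_{\theta\in K}e^{\theta^\top x}$ is integrable when the MGF is finite on a slightly larger open set, by bounding the supremum over $K$ by a finite sum of exponentials at extreme points). With that route chosen, the remaining steps — the chain-rule identities $\tilde{M}(t)=M(\Sigma^{-1/2}t)$, positivity of the denominator, and the continuous mapping theorem applied to the finitely many functionals in $H^{(n)}_N/n$ and $D^{(n)}_N/n$ — are exactly as routine as you claim.
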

Suppose  that the moment generating function of $X$ exists and is twice differentiable but $X$ is not from $\mathcal{N}_p$. Then, $\tilde{X}$ is not distributed as $N_p(0, \textbf{I}_p)$ and there must exist a point $t^*$ in the neighbourhood of $0$ such that $H_{\tilde{\Lambda}}(t^*) \neq \textbf{I}_p$. Since $H_{\tilde{\Lambda}}(\cdot)$ is continuous, there exists a neighbourhood $\mathcal{O}$ of $t^*$ such that $H_{\tilde{\Lambda}}(t) \neq \textbf{I}_p$ for all $t \in \mathcal{O}$. This implies that  $H_{ \tilde{\Lambda}^*}(t) \neq 0$ or $\tilde{D}(t) \neq \textbf{I}_p$ for all $t \in \mathcal{O}$. As a result, if $\{t_l\}^N_{l=1}$ is chosen using a space-filling design and $N$ is large enough, some $t_l$ will be in $\mathcal{O}$.  Hence, almost surely, $\lim_{n \rightarrow \infty} H^{(n)}_N = \infty$ or $\lim_{n \rightarrow \infty} D^{(n)}_N = \infty$ so that $\lim_{n \rightarrow \infty} T^{(n)}_N = \infty$. In practice, one can also perform the test with a large enough $N$ and see if the $p$-value is stable when $N$ increases.

\section{Simulation Studies}\label{sect:simulation}
We carried out extensive Monte Carlo study to evaluate the finite-sample sizes and powers of our proposed test statistics and compare with several tests in the literature. 
For the univariate case, we compare with 
\begin{enumerate}[(a)]
	\item the Cram\'er-von Mises (CvM) test (\cite{cramer1928composition}, \cite{mises1931wahrscheinlichkeitsrechnung}, and \cite{smirnov1936sui});
	\item the Anderson-Darling (AD) test (\cite{anderson1954test});
	\item the Shapiro-Wilk (SW) test (\cite{shapiro1965analysis});
	\item the Jarque-Bera (JB) test (\cite{jarque1987test});
	\item the Henze-Visagie (HV) test (\cite{henze2020testing}).
\end{enumerate}
The first four tests are well-known and will not be reviewed; see \cite{yap2011comparisons} for a review of these tests. 
For the implementation of CvM and AD, the functions \verb|cvm.test| and \verb|ad.test| in the R package \verb|nortest| are used, respectively. The SW test can be carried out using \verb|Shapiro.test| in the \verb|stats| R package. The JB test is carried out using \verb|jarque.bera.test| from the R package \verb|tseries|.
The HV test is developed based on a system of first-order partial differential equations that characterize the moment generating function of the $p$-variate standard normal distribution. The test statistic is
\begin{equation}\label{eq:HV}
	HV_{n,\gamma} := n \int_{\mathbb{R}^p} \|\triangledown M^{(n)}(t) - tM^{(n)}(t)\|^2 \exp (-\gamma \|t\|^2) dt,
\end{equation}
which can be computed in a closed-form formula; see equation (9) in (\cite{henze2020testing}). $H_0$ is rejected for large values of $HV_{n,\gamma}$. \cite{henze2020testing} recommended $\gamma = 5$ to be used when performing the test based on their numerical results  and we follow this suggestion in our numerical study. The R package \verb|mnt| contains the function \verb|HV| to compute this test statistic. For our proposed test statistic and the HV test statistic, $100,000$ independent replications were used to determine the critical value of the tests. Each size or power estimate is based on $10,000$ replications. We largely follow the choices of alternative distributions considered in \cite{yap2011comparisons}, where various shapes of distributions were considered. The alternative distributions considered here can be classified into symmetric short-tailed distributions, symmetric long-tailed distributions, asymmetric distributions, and mixture of normal distributions. 

We first define additional notation for some of these distributions. Denote GLD$(\lambda_1,\lambda_2,\lambda_3,\lambda_4)$ to be the genearlized lambda distribution proposed by \cite{ramberg1974approximate}, which is a four-parameter generalization of the two-parameter Tukey's Lambda family of distribution (\cite{hastings1947low}). The percentile function of GLD$(\lambda_1,\lambda_2,\lambda_3,\lambda_4)$ is given as
\begin{equation*}
	Q(y) = \lambda_1 + \frac{y^{\lambda_3} - (1-y)^{\lambda_4}}{\lambda_2}, \quad \text{where } 0 \leq y \leq 1,
\end{equation*}
where $\lambda_1$ is the location parameter, $\lambda_2$ is the scale parameter and $\lambda_3$ and $\lambda_4$ are the shape parameters. The density function of GLD$(\lambda_1,\lambda_2,\lambda_3,\lambda_4)$ at $x = Q(y)$ is 
\begin{equation*}
	f(x) = \frac{\lambda_2}{\lambda_3 y^{\lambda_3 - 1} - \lambda_4(1-y)^{\lambda_4-1}}.
\end{equation*}
The truncated normal distribution of a normal distribution with mean $\mu$ and standard deviation $\sigma$ truncated to the interval $(a,b)$ is denoted by Trunc$(a, b,\mu,\sigma)$. The scale-contaminated normal distribution, denote by ScConN$(p, b)$, is a mixture of two normal distributions with probability $p$ from a normal distribution $N(0, b^2)$ and probability $1-p$ from $N(0, 1)$. LoConN$(p, a)$ denotes the distribution of a mixture of two normal distributions with probability $p$ from a normal distribution with mean $a$ and variance $1$ and with probability $1-p$ from a standard normal distribution.

The symmetric short-tailed distributions include $U(0, 1)$, Beta$(0.5,0.5)$, Beta$(2,2)$, $\allowbreak \text{GLD}(0,1,0.25,0.25)$, GLD$(0,1,0.5,0.5)$, GLD$(0,1,0.75,0.75)$, GLD$(0,1,1.25,1.25)$, $\allowbreak \text{Trunc}(-2,2,0,1)$, $\allowbreak \text{Trunc}(-3,3,0,2)$, and $\text{Trunc}(-2,2,0,2)$. The symmetric long-tailed distributions include Laplace, logistic, GLD$(0, 1,-0.1,-0.1)$, GLD$(0,1,-0.15,-0.15)$, $t(5)$, $t(10)$, and $t(15)$. The asymmetric distributions include exp$(1)$, lognormal$(0, 0.5)$, Gamma$(4, 5)$, Beta$(2, 1)$, Beta$(3, 2)$, Weibull$(3, 1)$, Pareto$(1, 3)$, $\chi^2(4)$, $\chi^2(10)$, and $\chi^2(20)$. The normal mixtures considered are ScConN$(0.2, 5)$, ScConN$(0.05, 5)$, LoConN$(0.5, 3)$, and LoConN$(0.5, 2)$.

In the simulation study, the set of points $\{t_1,\ldots, t_N\}$ is chosen randomly from $\{t \in \mathbb{R}^p: \|t\| \leq R\}$. The upper panel of Figure \ref{figure:uni_B_M_dependence} shows the empirical reject proportions of our test in the univariate case for the $32$ distributions in the alternative hypothesis with different values $R$ when $N$ is fixed at $500$ when the sample size is $50$. It can be seen that the results are similar for $R$ increasing from $3$ to $10$. The lower panel of Figure \ref{figure:uni_B_M_dependence} fixed $R = 3$ with different values of $N$. We can see that different values of $N$ result in similar reject proportions. Similar results were obtained with different sets of random points.

Figures \ref{figure:uni_result1} and \ref{figure:uni_result2} compare our proposed tests with other normality tests at different sample sizes. 
 It can be seen that our test outperformed other tests, including the Shapiro-Wilk test, which is often considered as the best univariate normality test, when the true distribution has a bounded support and tend to have performance in between those of other tests in other distributions. Under $H_0$ (results not shown in the figures), the sizes of our tests are close to $0.05$, as our test statistic is distribution-free under $H_0$ and the critical value is determined using simulation.

For the multivariate case, we compare our proposed test statistic with the following tests: the energy test of \cite{szekely2005new}, the Henze-Visagie (HV) test (defined in (\ref{eq:HV})), the Henze–Jim\'enez-Gamero (HJ) test (\cite{henze2019new}, the Henze-Zirkler (HZ) test (\cite{henze1990class}) and the Mardia's test (\cite{mardia1970measures}) based on skewness (MS) and kurtosis (MK). A brief description of these tests is as follows. \cite{szekely2005new} proposed the test statistic
\begin{equation*}
	\mathcal{E}_n := n \left( \frac{2}{n}\sum^n_{i=1}\mathbb{E}\|Z_{n,i}-X\| - \frac{2 \Gamma((p+1)/2)}{\Gamma(p/2)} - \frac{1}{n^2}\sum^n_{i,j=1}\|Z_{n,i} - Z_{n,j}\| \right),
\end{equation*}
where the first expectation is taking with respect to $X$, which follows $N(0, \textbf{I}_p)$, and
\begin{equation*}
	\mathbb{E}(\|a-X\|) = \frac{\sqrt{2} \Gamma(\frac{p+1}{2})}{\Gamma(\frac{p}{2})} +
	\sqrt{ \frac{2}{\pi}} \sum^\infty_{k=0}\frac{(-1)^k}{k!2^k} \frac{\|a\|^{2k+2}}{(2k+1)(2k+2)} \frac{ \Gamma( \frac{p+1}{2}) \Gamma(k + \frac{3}{2})}{\Gamma(k+\frac{p}{2}+1)}.
\end{equation*}
The test using $\mathcal{E}_n$ is known as the energy test and we make use of the function \verb|mvnorm.etest| in the R package \verb|engergy| to compute the test statistic and its $p$-value. \cite{szekely2005new} concluded that the energy test is a powerful omnibus test having relatively good power against general alternatives compared with other tests. 
The Henze–Jim\'enez-Gamero test is based on the test statistic
\begin{equation*}
	HJ_{n, \beta} := n \int_{\mathbb{R}^p} (M^{(n)}(t) - M(t))^2 \exp (- \beta\| t\|^2) dt,
\end{equation*} 
where the test rejects $H_0$ for large values of $HJ_{n,\beta}$. We include HV and HJ tests for comparison because our test is also based on characterization of normality using a system of partial differential equations involving the moment generating function. It will be of interest to compare the performance of these tests. The Henze-Zirkler test is based on empirical characteristic function of the scaled residuals:
\begin{equation*}
	HZ_{n,\gamma} := (2\pi\gamma^2)^{-p/2} \int_{\mathbb{R}^d} \left| \frac{1}{n}\sum^n_{j=1}\exp(i t^\top Z_{n,j} ) - \exp \left( - \frac{\|t\|^2}{2} \right) \right|^2 \exp \left( - \frac{\|t\|^2}{2\gamma^2} \right)dt.
\end{equation*}
We compute the test statistics for HV, HJ and HZ tests using the functions \verb|HV|, \verb|HJ|, and \verb|HZ| in the package \verb|mnt|, respectively.
Mardia's test for multinormality based on sample skewness rejects $H_0$ for large values of $b_{1,p}$, where
\begin{equation*}
	b_{1,p} := \frac{1}{n^2}\sum^n_{i,j=1}(Z^\top_{n,i} Z_{n,j})^3.
\end{equation*}
The sample kurtosis is given by
\begin{equation*}
	b_{2, p} := \frac{1}{n}\sum^n_{i=1}\|Z_{n,i}\|^4.
\end{equation*}
See \cite{mardia1970measures} for their limiting null distributions. To perform the tests based on the sample skewness and kurtosis, we use the function \verb|mult.norm| from the R package \verb|QuantPsyc|. All the critical values for these multivariate tests are determined from $100,000$ independent replications under the null hypothesis.

For the distributions in the alternative hypothesis, we consider multivariate distributions with independent components  where the marginal distributions are the $32$ alternative distributions considered in the univariate case, except for the normal mixture. In addition, we consider dependent multivariate distributions with normal marginals, where the dependence structure is generated from some common copulas, including the Clayton, Gumbel, Frank, Ali–Mikhail–Haq (AMH), and $t$ copulas (see \cite{nelsen2007introduction}). Except for the $t$ copula, the other copulas are parameterized by one parameter. For $t$ copula, we use the notation tCopula($\rho$, df), where $\rho$ and df denote the parameter in the exchangeable dispersion structure and degrees of freedom, respectively. We also consider the setting where the dependence structure is generated from the Gaussian copula but the marginal distributions are non-normal.

Figure \ref{figure:multi_B_M_dependence} shows the performance of the proposed test with different values of $R$ and $N$ when $n = 50$. The results are similar and the test with $R = 3$ performed the best over the distributions considered. For the values of $N$, as long as it is not too small, increasing $N$ will not increase the power of the test. Figures \ref{figure:multi_result1} and \ref{figure:multi_result2} compares the empirical reject proportions of the proposed test to other tests with different dimensions and when $n = 50$, $R = 3$ and $N = 500$. We see that our test performs the best in many different multivariate distributions. Under $H_0$, all the tests considered have around $0.05$ sizes as the critical values can be determined by simulation.

More details of the simulation results for both the univariate and multivariate cases are given in the Appendix.

\begin{figure}[H]
	\centering
	\includegraphics[width = 14cm]{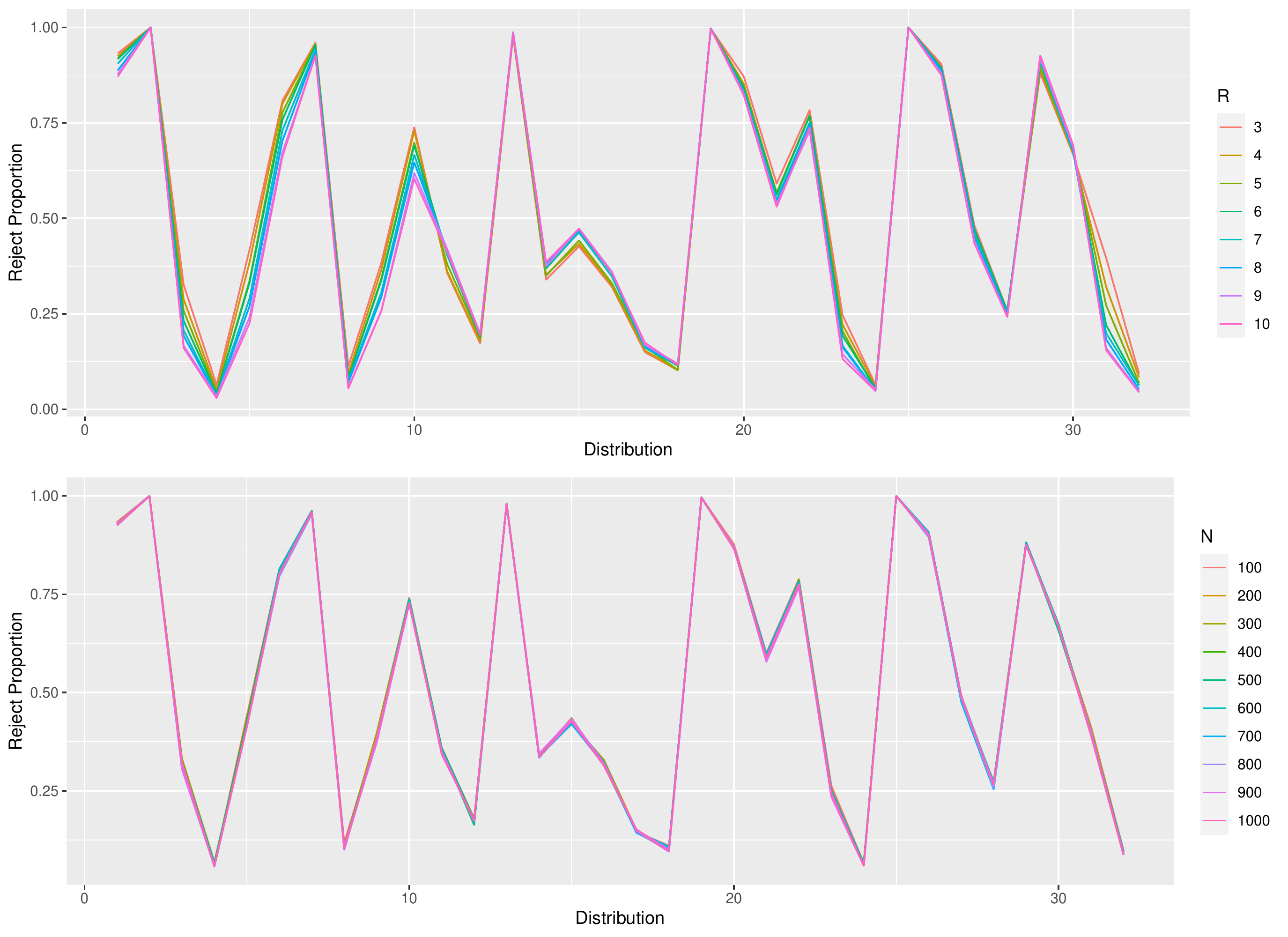}
	\caption{Univariate test under alternatives when $n = 50$. The upper panel shows the empirical reject proportions under the $32$ alternative distributions when $N = 500$ with different values of $R$. The lower panel shows  the empirical reject proportions under the $32$ alternative distributions when $R = 3$ with different values of $N$.}
	\label{figure:uni_B_M_dependence}
\end{figure}

\begin{figure}[H]
	\centering
	\includegraphics[width = 16cm]{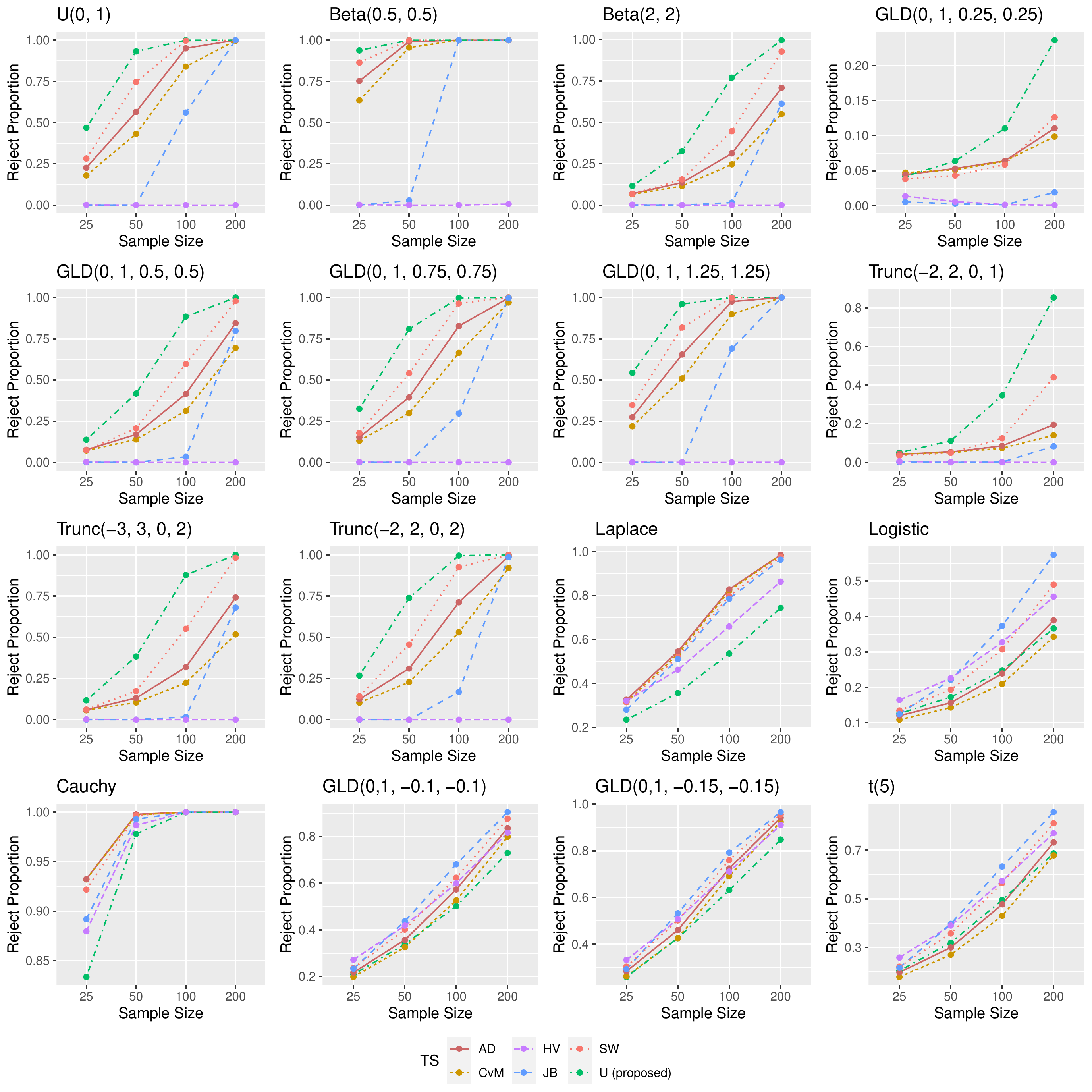}
	\caption{Univariate test under different alternative distributions. Here $R = 3$ and $N = 500$ for our test statistic.}
	\label{figure:uni_result1}
\end{figure}

\begin{figure}[H]
	\centering
	\includegraphics[width = 16cm]{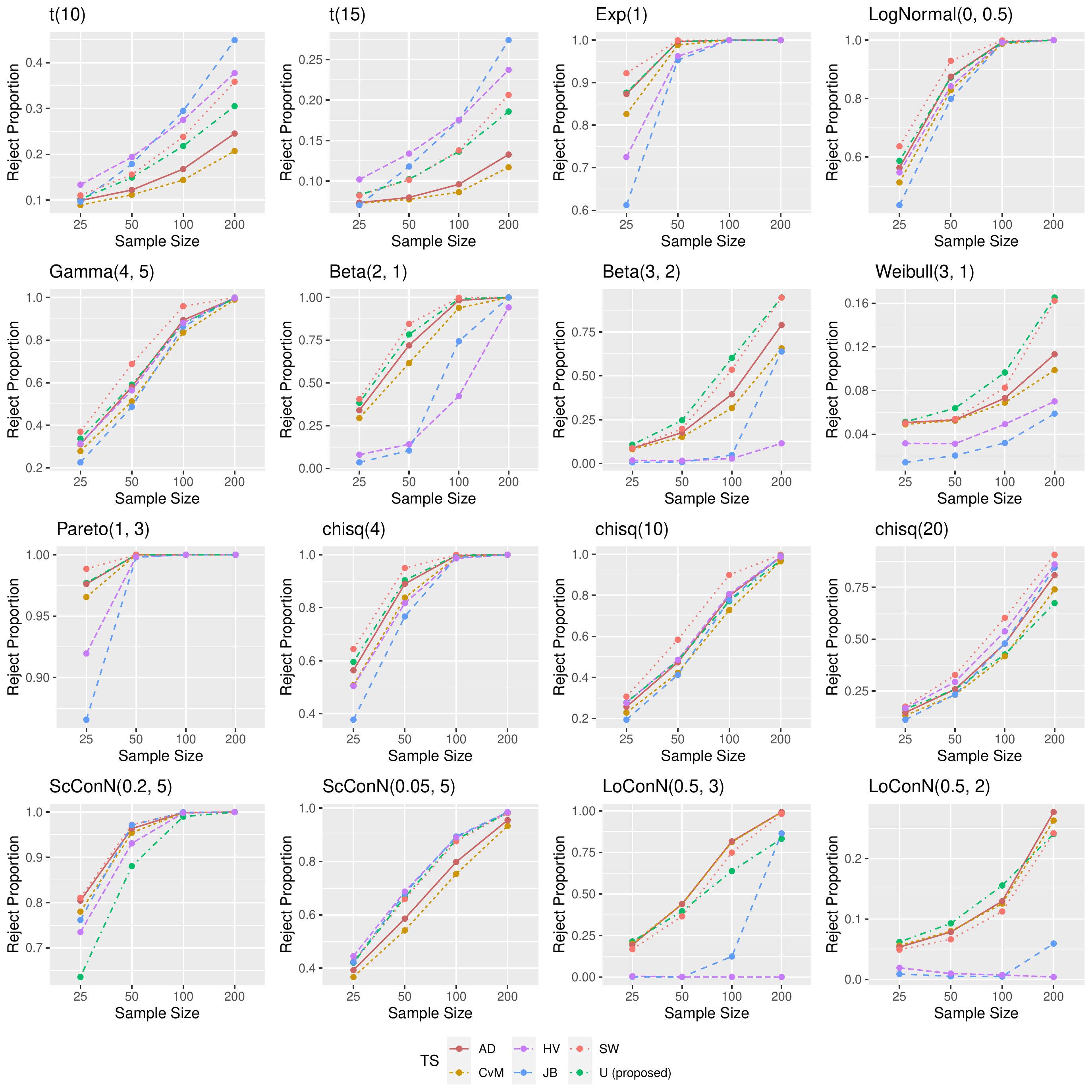}
	\caption{Univariate test under different alternative distributions (continued). Here $R = 3$ and $N = 500$ for our test statistic.}
	\label{figure:uni_result2}
\end{figure}

\begin{figure}[H]
	\centering
	\includegraphics[width = 14cm]{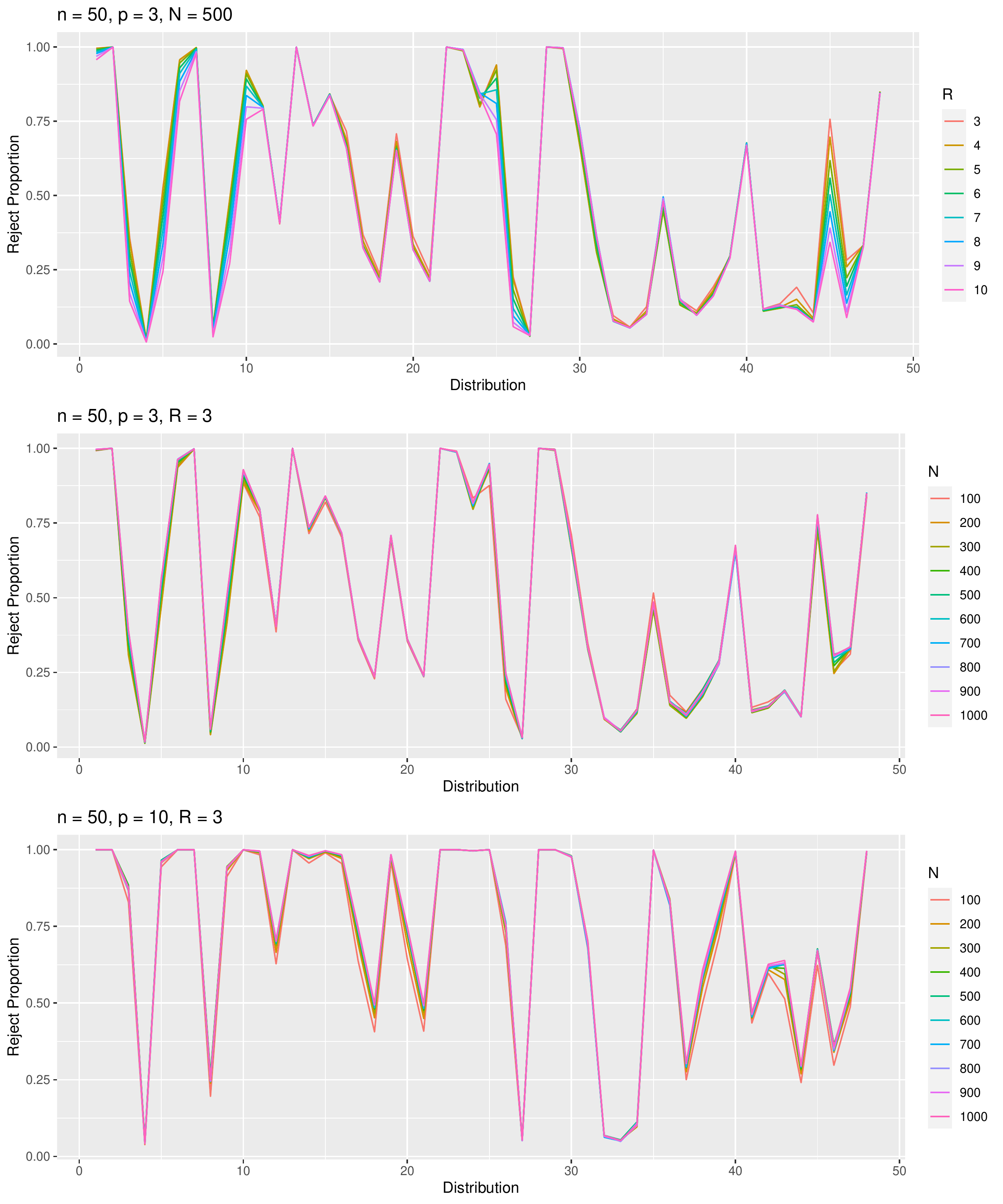}
	\caption{Multivariate test under alternatives when $n = 50$. The upper panel shows the empirical reject proportions under the $48$ alternative distributions when $N = 500$ and $p = 3$ with different values of $R$. 
		The middle panel shows  the empirical reject proportions when $R = 3$ and $p = 3$ with different values of $N$.
		The lower panel shows  the empirical reject proportions when $R = 3$ and $p = 10$ with different values of $N$.}
\label{figure:multi_B_M_dependence}
\end{figure}

\begin{figure}[H]
	\centering
	\includegraphics[width = 16cm]{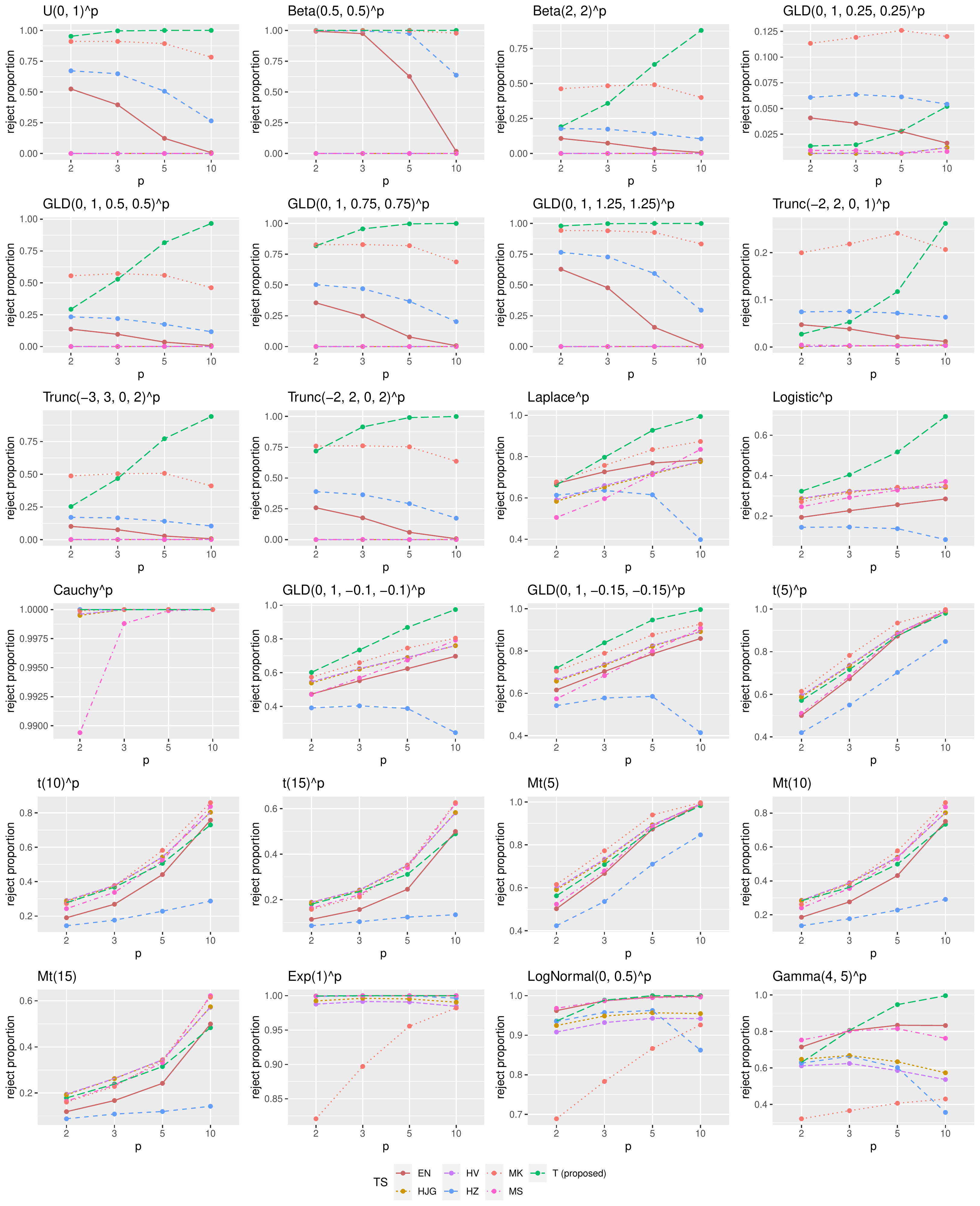}
	\caption{Multivariate test under different alternative distributions and different dimensions $p$ when $n = 50$. Here $R = 3$ and $N = 500$ for our test statistic.}
\label{figure:multi_result1}
\end{figure}

\begin{figure}[H]
	\centering
	\includegraphics[width = 16cm]{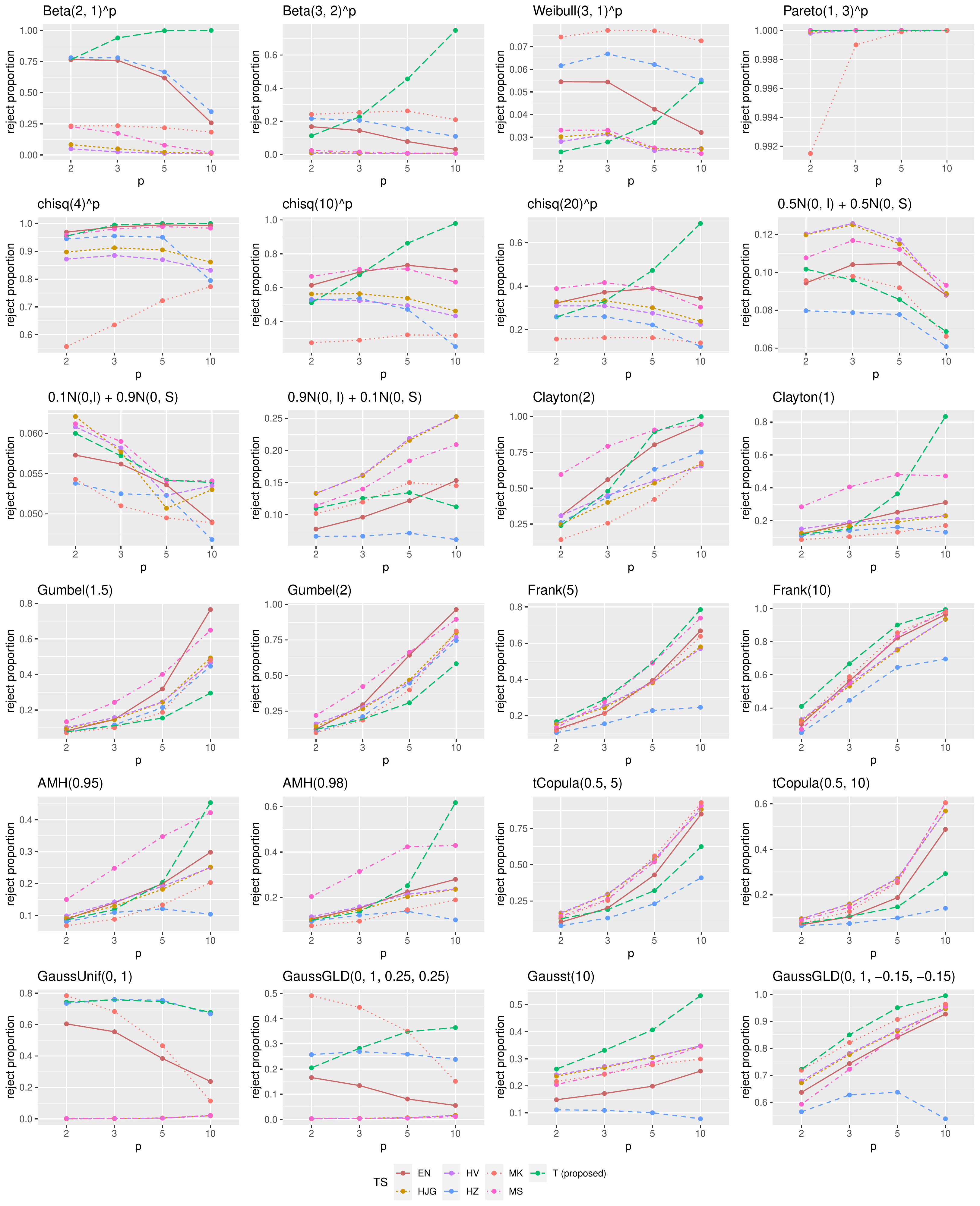}
	\caption{Multivariate test under different alternative distributions and different dimensions $p$ when $n = 50$. Here $R = 3$ and $N = 500$ for our test statistic. $S$ is a matrix with all $0.5$'s in the off-diagonal entries and $1$'s in the diagonal entries.}
	\label{figure:multi_result2}
\end{figure}

\section{Conclusion}\label{sect:conclusion}
In this article, a novel class of tests for multivariate normality is proposed. Our extensive Monte Carlo study suggested that our test is more powerful in many alternatives compared with existing common tests. 
We suggest performing our test with $R = 3$ and $N = 500$, where the set of points can be simulated uniformly from the ball centred at $0$ with radius $R$.

Another possible class of tests can be obtained replacing the moment generating function with the characteristic function in the definition of the cumulant generating function. The finite-sample performance of such a test is undergoing investigation. The idea of combining the dependence structure and marginal information can also be applied to other goodness-of-fit problems.

\section*{Acknowledgement}
Chan and Tang were partially funded by the US National Institutes of Health grant R01HL122212. Hok Kan Ling acknowledges the support by NSERC
Grant RGPIN/03124-2021.
The fourth author acknowledges financial support from Hong Kong General Research Fund Grants HKGRF-14300319 “Shape-Constrained Inference: Testing for Monotonicity” and HKGRF-14301321“General Theory for Infinite Dimensional Stochastic Control: Mean Field and Some Classical Problems.

\section{Appendix} 
To facilitate the proofs, we first define additional notations and describe some of their properties. Let
\begin{equation}\label{eq:Delta_ni}
	\Delta_{n,i} := Z_{n,i} - X_i = (S^{-1/2}_n - \textbf{I}_p)X_i - S^{-1/2}_n\overline{X}_n.	
\end{equation}
By Taylor's theorem, for some $|\theta_{n,i}(t)| \leq 1$, we have
\begin{equation}\label{eq:Taylor_etDelta}
	e^{t^\top \Delta_{n,i}} = 1 + t^\top \Delta_{n,i} + \frac{1}{2} (t^\top \Delta_{n,i})^2 e^{\theta_{n,i}(t) t^\top \Delta_{ni}}.
\end{equation}
Under $H_0$, by (2.13) of \cite{henze1997new}, we have
\begin{equation}\label{eq:S_squareroot_I}
	\sqrt{n}(S^{-1/2}_n - \textbf{I}_p) = - \frac{1}{2\sqrt{n}} \sum^n_{i=1} (X_i X_i^\top -  \textbf{I}_p) + O_p(n^{-1/2}),
\end{equation}
which is $O_p(1)$ by the central limit theorem because $\mathbb{E}(X_i X_i^\top - \textbf{I}_p) = 0$. Define $\|\cdot\|_2$ to be the spectral norm of a matrix. From (\ref{eq:Delta_ni}), we have
\begin{equation}\label{eq:Delta_bound}
	\|\Delta_{n,i}\| \leq  \|(S^{-1/2}_n - \textbf{I}_p)\|_2 \cdot \|X_i\| + \|S^{-1/2}_n\|_2 \cdot  \|\overline{X}_n\|.
\end{equation}
This inequality (\ref{eq:Delta_bound}) together with (\ref{eq:S_squareroot_I}) and the fact that  $\max_{i=1,\ldots,n}\|X_i\| = O_p(\log n)$ (see Proposition A.1 in \cite{henze2019characterizations}) imply that
\begin{equation}\label{eq:t_Delta_ni2}
	\max_{i=1,\ldots,n} \| \Delta_{n,i}\| = O_p(n^{-1/2}\log n).
\end{equation}
Recall that $M_0(t)$ denotes the moment generating function of $X \sim N_p(0, \textbf{I}_p)$. By the strong law of large numbers, for each $t \in \mathbb{R}^p$, $M_{n,0}(t) \stackrel{a.s.}{\rightarrow}  M(t) = e^{\|t\|^2/2}$, $\triangledown M_{n,0}(t) \stackrel{a.s.}{\rightarrow} \triangledown M_0(t) = t e^{\|t\|^2/2}$, and $H_{M_{n,0}}(t) \stackrel{a.s.}{\rightarrow} H_{M_0}(t) = (tt^\top + \textbf{I}_p) e^{\|t\|^2/2}$. Finally, $\frac{1}{n}\sum^n_{i=1} e^{t^\top X_i} X_i X_i^\top X_{ih} \stackrel{a.s.}{\rightarrow} \mathbb{E}(e^{t^\top X} X X^\top X_h)$, where $X_h$ is the $h$-th component of $X$ and
\begin{align}
	\mathbb{E}(e^{t^\top X} X X^\top X_h) &	 = \frac{\partial}{\partial t_h} \mathbb{E}(e^{t^\top X} XX^\top) 
	= \frac{\partial}{\partial t_h}\left\{ (tt^\top + \textbf{I}_p) e^{\|t\|^2/2}\right\} \nonumber  \\
	&= \frac{\partial tt^\top}{\partial t_h} e^{\|t\|^2/2} + tt^\top t_h e^{\|t\|^2/2} + t_h e^{\|t\|^2/2}\textbf{I}_p \nonumber \\
	&=e^{\|t\|^2/2} \left( \frac{\partial tt^\top}{\partial t_h} + tt^\top t_h + t_h \textbf{I}_p \right). \label{eq:EXXXh}
\end{align}
\subsection{Proofs for Section \ref{sect:asy_null_dist}}

\begin{proof}[Proof of Lemma \ref{lemma:Mn_Mn0}]
	\begin{enumerate}[(a)]
		\item 
		By  (\ref{eq:Delta_ni}) and (\ref{eq:Taylor_etDelta}),
		\begin{align}
			\sqrt{n}(M^{(n)}(t) - M^{(n)}_0(t)) 
			&= \frac{1}{\sqrt{n}}\sum^n_{i=1} e^{t^\top X_i} (e^{t^\top \Delta_{n,i}} - 1)  \nonumber \\
			& = \frac{1}{\sqrt{n}}\sum^n_{i=1} e^{t^\top X_i} \left\{ t^\top \Delta_{n,i} + \frac{1}{2}(t^\top \Delta_{n,i})^2 e^{\theta_{n,i}(t) t^\top \Delta_{n,i}} \right\} \nonumber \\
			&=: A_{1n} + A_{2n} + A_{3n} + A_{4n}, \label{eq:Lemma_1_a_decompose}
		\end{align}
		where
		\begin{align*}
			A_{1n} &:= \frac{1}{\sqrt{n}}\sum^n_{i=1} e^{t^\top X_i} t^\top (S^{-1/2}_n - \textbf{I}_p)X_i;\\
			A_{2n} &:= -\frac{1}{\sqrt{n}}\sum^n_{i=1} e^{t^\top X_i} t^\top (S^{-1/2}_n -  \textbf{I}_p) \overline{X}_n; \\
			A_{3n} &:= -\frac{1}{\sqrt{n}}\sum^n_{i=1} e^{t^\top X_i} t^\top \overline{X}_n;\\
			A_{4n} &:= \frac{1}{\sqrt{n}} \sum^n_{i=1} e^{t^\top X_i}\frac{1}{2}(t^\top \Delta_{n,i})^2 e^{\theta_{n,i}(t) t^\top \Delta_{n,i}}.
		\end{align*}
		For $A_{1n}$, we have	
		\begin{align}
			A_{1n}	&= \frac{1}{n}\sum^n_{i=1} e^{t^\top X_i} X_i^\top  \sqrt{n}(S^{-1/2}_n -\textbf{I}_p)t \nonumber \\
			&= \left( \frac{1}{n}\sum^n_{i=1} e^{t^\top X_i} X_i^\top \right) \left\{ - \frac{1}{2\sqrt{n}} \sum^n_{j=1} (X_j X_j^\top - \textbf{I}_p) + O_p(n^{-1/2}) \right\} t \nonumber  \\
			&= \left( \frac{1}{n}\sum^n_{i=1} e^{t^\top X_i} X_i^\top \right) \left\{ - \frac{1}{2\sqrt{n}} \sum^n_{j=1} (X_j X_j^\top - \textbf{I}_p)  \right\} t  + o_p(1) \nonumber  \\
			& = e^{\|t\|^2/2} t^\top \left\{ - \frac{1}{2\sqrt{n}} \sum^n_{j=1} (X_j X_j^\top - \textbf{I}_p)  \right\}  t  + o_p(1) \nonumber  \\
			&= - \frac{e^{\|t\|^2/2}}{2\sqrt{n}} \sum^n_{i=1}\left\{ (t^\top X_i)^2 - \|t\|^2 \right\}  + o_p(1), \label{eq:Lemma_1_a_decompose_A1n}
		\end{align}
		where the second equality follows from (\ref{eq:S_squareroot_I}) and the second last equality follows from the weak law of large numbers.
		For $A_{2n}$, by (\ref{eq:S_squareroot_I}), 
		\begin{align}
			A_{2n}				&= \bigg(- \frac{1}{n} \sum^n_{i=1} e^{t^\top X_i} t^\top \bigg)\sqrt{n}(S^{-1/2}_n - \textbf{I}_p) \overline{X}_n \nonumber  \\
			&= O_p(1) O_p(1) O_p(n^{-1/2}) = o_p(1). \label{eq:Lemma_1_a_decompose_A2n}
		\end{align}		
		For $A_{3n}$, by the weak law of large numbers,
		\begin{align}
			A_{3n} 	&= \bigg( - e^{\|t\|^2/2}t^\top + o_p(1) \bigg) \sqrt{n} \cdot \overline{X}_n = -e^{\|t\|^2/2} \frac{1}{\sqrt{n}} \sum^n_{i=1} t^\top X_i + o_p(1). \label{eq:Lemma_1_a_decompose_A3n}
		\end{align}
		By (\ref{eq:t_Delta_ni2}), we have
		\begin{align}		
			A_{4n}	&\leq \frac{\sqrt{n}}{2} \|t\|^2 \max_{i=1,\ldots,n} \|\Delta_{n,i} \|^2 e^{\|t\| \max_{i=1,\ldots,n}\|\Delta_{n,i}\|} \left( \frac{1}{n} \sum^n_{i=1} e^{t^\top X_i} \right) \nonumber \\
			&= \sqrt{n} O_p(n^{-1}(\log n)^2 ) O_p(1) O_p(1)= o_p(1). \label{eq:Lemma_1_a_decompose_A4n}
		\end{align}
	The result then follows from (\ref{eq:Lemma_1_a_decompose})-(\ref{eq:Lemma_1_a_decompose_A4n}).
		
		\item 		By  (\ref{eq:Delta_ni}) and (\ref{eq:Taylor_etDelta}),
		\begin{align}
			&	\sqrt{n}(\triangledown M^{(n)}(t) - \triangledown M^{(n)}_{0}(t)) \nonumber \\
			&= \frac{1}{\sqrt{n}}\sum^n_{i=1}(Z_{n,i} - X_i) e^{t^\top Z_{n,i}} + \frac{1}{\sqrt{n}}\sum^n_{i=1} X_i \left( e^{t^\top Z_{n,i}} - e^{t^\top X_i} \right)\nonumber \\
			&= \frac{1}{\sqrt{n}}\sum^n_{i=1} \Delta_{n,i} e^{t^\top X_i} e^{t^\top \Delta_{n,i}} + \frac{1}{\sqrt{n}}\sum^n_{i=1} X_i e^{t^\top X_i} (e^{t^\top \Delta_{n,i}} - 1) \nonumber \\
			&= B_{1n} + B_{2n} + B_{3n} + B_{4n},\label{eq:B1n_B2n_B3n_B4n}
		\end{align}
		where
		\begin{align*}
			B_{1n} &:= \frac{1}{\sqrt{n}}\sum^n_{i=1} (S^{-1/2}_n - \textbf{I}_p) X_i e^{t^\top X_i} e^{t^\top \Delta_{n,i}};\\
			B_{2n} &:= - \frac{1}{\sqrt{n}}\sum^n_{i=1} (S^{-1/2}_n -  \textbf{I}_p) \overline{X}_n e^{t^\top X_i} e^{t^\top \Delta_{n,i}};\\
			B_{3n} &:= - \frac{1}{\sqrt{n}}\sum^n_{i=1} \overline{X}_n e^{t^\top X_i} e^{t^\top \Delta_{n,i}};\\
			B_{4n} &:= \frac{1}{\sqrt{n}}\sum^n_{i=1} X_i e^{t^\top X_i} \left\{ t^\top \Delta_{n,i} + \frac{1}{2}(t^\top \Delta_{n,i})^2 e^{\theta_{n,i}(t) t^\top\Delta_{n,i}} \right\}.
		\end{align*}
		By (\ref{eq:S_squareroot_I}) and the weak law of large numbers,
		\begin{align}
			B_{1n} &= \frac{1}{n} \sum^n_{i=1} \left\{ - \frac{1}{2\sqrt{n}} \sum^n_{j=1}(X_j X_j^\top - \textbf{I}_p) \right\} X_i e^{t^\top X_i} + o_p(1)  \nonumber \\
			&= - \frac{1}{2\sqrt{n}} \sum^n_{i=1} (X_i X_i^\top - \textbf{I}_p) \left( t e^{\|t\|^2/2} + o_p(1) \right) + o_p(1)\nonumber  \\
			&=- \frac{e^{\|t\|^2/2}}{2\sqrt{n}} \sum^n_{i=1} (X_i X_i^\top t - t)    + o_p(1). \label{eq:B1n}
		\end{align}
		By (\ref{eq:S_squareroot_I}),
		\begin{align}
			\|B_{2n}\| & \leq \left \|\frac{1}{n}\sum^n_{i=1}e^{t^\top X_i}\right\| \|S^{-1/2}_n - \textbf{I}_p\|_2 \|\sqrt{n} \cdot \overline{X}_n\| e^{ \|t\| \max_{i=1,\ldots,n} \|\Delta_{n,i}\| } \nonumber  \\
			&= O_p(1) O_p(n^{-1/2}) O_p(1) O_p(1) = o_p(1). \label{eq:B2n}
		\end{align}	
		By the weak law of large numbers,
		\begin{align}\label{eq:B3n}
			B_{3n} &= - \frac{1}{n}\sum^n_{i=1} e^{t^\top X_i} \sqrt{n} \cdot \overline{X}_n + o_p(1) 
			= - \frac{e^{\|t\|^2/2} }{\sqrt{n}}\sum^n_{i=1}X_i + o_p(1).
		\end{align}
		Finally, by  (\ref{eq:Delta_ni}) 
		\begin{align}
			B_{4n} &= \frac{1}{\sqrt{n}}\sum^n_{i=1} X_i e^{t^\top X_i} t^\top \Delta_{n,i}  + o_p(1) \nonumber \\
			&= \frac{1}{\sqrt{n}}\sum^n_{i=1} X_i e^{t^\top X_i} t^\top (S^{-1/2}_n - \textbf{I}_p) X_i - \frac{1}{\sqrt{n}}\sum^n_{i=1} X_i e^{t^\top X_i} t^\top (S^{-1/2}_n - \textbf{I}_p) \overline{X}_n \nonumber \\
			& \quad - \frac{1}{\sqrt{n}}\sum^n_{i=1} X_i e^{t^\top X_i} t^\top \overline{X}_n +o_p(1)\nonumber \\
			&=\frac{1}{\sqrt{n}}\sum^n_{i=1} e^{t^\top X_i} X_i  X^\top_i (S^{-1/2}_n - \textbf{I}_p) t - t e^{\|t\|^2/2} \frac{1}{\sqrt{n}} \sum^n_{i=1} t^\top X_i + o_p(1)\nonumber \\
			&= - \frac{e^{\|t\|^2/2}}{2\sqrt{n}} (\textbf{I}_p + tt^\top) \sum^n_{i=1} (X_i X_i^\top - \textbf{I}_p) t - \frac{e^{\|t\|^2/2}}{\sqrt{n}} \sum^n_{i=1} t t^\top X_i +o_p(1), \label{eq:B4n}
		\end{align}
	where the last equality follows from  (\ref{eq:S_squareroot_I}) and the weak law of large numbers.
		Combining (\ref{eq:B1n_B2n_B3n_B4n}), (\ref{eq:B1n}), (\ref{eq:B2n}), (\ref{eq:B3n}), and (\ref{eq:B4n}), we have
			\begin{align*}
			&	\sqrt{n}(\triangledown M^{(n)}(t) - \triangledown M^{(0)}_{0}(t)) \\
			&= \frac{e^{\|t\|^2/2}}{\sqrt{n}} \sum^n_{i=1} \left\{ - \frac{1}{2} (X_iX_i^\top t - t) - X_i -\frac{1}{2}(\textbf{I}_p + tt^\top) (X_i X_i^\top - \textbf{I}_p)t - tt^\top X_i \right\} + o_p(1)\\
			&= \frac{e^{\|t\|^2/2}}{\sqrt{n}} \sum^n_{i=1} \left\{- (\textbf{I}_p +tt^\top)X_i - \frac{1}{2} (2 \textbf{I}_p + tt^\top) (X_iX_i^\top - \textbf{I}_p)t \right\} + o_p(1).
		\end{align*}
			
		\item 	Note that
		\begin{align*}
			Z_{n,i}Z_{n,i}^\top - X_i X_i^\top &= 	(\Delta_{n,i}+X_i)(\Delta_{n,i}+ X_i)^\top - X_i X_i^\top\\
			&=\Delta_{n,i} \Delta^\top_{n,i} + \Delta_{n,i} X^\top_i + X_i \Delta^\top_{n,i}.
		\end{align*}
		Thus,
		\begin{align*}
			&	\sqrt{n}( H_{M^{(n)}}(t) - H_{M^{(n)}_{0}}(t))\\
			&= \frac{1}{\sqrt{n}}\sum^n_{i=1} (Z_{n,i} Z_{n,i}^\top - X_i X_i^\top) e^{t^\top X_i} e^{t^\top \Delta_{n,i}} + \frac{1}{\sqrt{n}}\sum^n_{i=1} X_i X_i^\top e^{t^\top X_i} (e^{t^\top \Delta_{n,i}} - 1)\\
			&=C_{1n} + C_{2n} + C_{3n} +C_{4n},
		\end{align*}
		where 
		\begin{align*}
			C_{1n} &:= \frac{1}{\sqrt{n}}\sum^n_{i=1} \Delta_{n,i}\Delta_{n,i}^\top e^{t^\top X_i} e^{t^\top \Delta_{n,i}};\\
			C_{2n} &:= \frac{1}{\sqrt{n}}\sum^n_{i=1} \Delta_{n,i}X^\top_i e^{t^\top X_i} e^{t^\top \Delta_{n,i}};\\
			C_{3n} &:= \frac{1}{\sqrt{n}}\sum^n_{i=1} X_i\Delta_{n,i}^\top e^{t^\top X_i} e^{t^\top \Delta_{n,i}};\\
			C_{4n} &:= \frac{1}{\sqrt{n}} \sum^n_{i=1} X_i X_i^\top e^{t^\top X_i} \left\{ t^\top \Delta_{n,i} + \frac{1}{2}(t^\top \Delta_{n,i})^2e^{\theta_{n,i}(t) t^\top \Delta_{n,i}} \right\}.
		\end{align*}
		For $C_{1n}$, by (\ref{eq:t_Delta_ni2}),
		\begin{align*}
			\|C_{1n}\|_2 &\leq \sqrt{n} \left| \frac{1}{n}\sum^n_{i=1}e^{t^\top X_i} \right| \max_{i=1,\ldots,n} \|\Delta_{n,i}\|^2 e^{\| t\| \max_{i=1,\ldots,n} \|\Delta_{n,i}\|} \\
			&= \sqrt{n} O_p(1) O_p(n^{-1} (\log n)^2)O_p(1) = o_p(1).
		\end{align*}
		For $C_{2n}$, by (\ref{eq:S_squareroot_I}),
		\begin{align*}
			C_{2n} &=\frac{1}{\sqrt{n}}\sum^n_{i=1} (S^{-1/2}_n - \textbf{I}_p) X_i X_i^\top e^{t^\top X_i} - \frac{1}{\sqrt{n}}\sum^n_{i=1} (S^{-1/2}_n - \textbf{I}_p) \overline{X}_n X_i^\top e^{t^\top X_i} \\
			& \quad - \frac{1}{\sqrt{n}} \sum^n_{i=1} \overline{X}_n X_i^\top e^{t^\top X_i} + o_p(1)\\
			&= \frac{1}{n}\sum^n_{i=1} \left\{- \frac{1}{2\sqrt{n}} \sum^n_{j=1}(X_j X_j^\top - \textbf{I}_p) + O_p(n^{-1/2}) \right\} X_i X_i^\top e^{t^\top X_i} + O_p(n^{-1/2})\\
			& \quad - \frac{1}{\sqrt{n}} \sum^n_{i=1} X_i \left( t^\top e^{\|t\|^2/2} + o_p(1) \right) + o_p(1)\\
			&=- \frac{1}{2\sqrt{n}}\sum^n_{i=1} (X_i X_i^\top - \textbf{I}_p) e^{\|t\|^2/2} (\textbf{I}_p+ tt^\top + o_p(1)) - \frac{e^{\|t\|^2/2}}{\sqrt{n}}\sum^n_{i=1} X_i t^\top + o_p(1) \\
			&= -\frac{e^{\|t\|^2/2}}{2\sqrt{n}}\sum^n_{i=1}(X_i X_i^\top - \textbf{I}_p) (\textbf{I}_p + tt^\top) -
			\frac{e^{\|t\|^2/2}}{\sqrt{n}}\sum^n_{i=1} X_i t^\top + o_p(1).
		\end{align*}
		Since $C_{3n} = C_{2n}^\top$, we have
		\begin{equation*}
			C_{3n} = -\frac{e^{\|t\|^2/2}}{2\sqrt{n}}\sum^n_{i=1}  (\textbf{I}_p + tt^\top)(X_i X_i^\top - \textbf{I}_p) -
			\frac{e^{\|t\|^2/2}}{\sqrt{n}}\sum^n_{i=1} t X^\top_i  + o_p(1).
		\end{equation*}
		For $C_{4n}$, 		
		\begin{align*}
			C_{4n} &= \frac{1}{\sqrt{n}} \sum^n_{i=1} X_i X_i^\top e^{t^\top X_i} t^\top (S^{-1/2}_n - \textbf{I}_p) X_i - \frac{1}{\sqrt{n}}\sum^n_{i=1} X_i X_i^\top e^{t^\top X_i} t^\top (S^{-1/2}_n - \textbf{I}_p)\overline{X}_n \\
			&\quad - \frac{1}{\sqrt{n}} \sum^n_{i=1} X_i X_i^\top e^{t^\top X_i} t^\top \overline{X}_n + o_p(1)\\
			&=\frac{1}{n}\sum^n_{i=1} e^{t^\top X_i} X_i X_i^\top X^\top_i \left\{ - \frac{1}{2\sqrt{n}} \sum^n_{j=1}(X_j X_j^\top - \textbf{I}_p) +O_p(n^{-1/2}) \right\} t + O_p(n^{-1/2})\\
			& \quad - \frac{e^{\|t\|^2/2}}{\sqrt{n}} (\textbf{I}_p + tt^\top) \sum^n_{i=1} t^\top X_i + o_p(1)\\
			&=\frac{1}{n}\sum^n_{i=1} e^{t^\top X_i} X_i X_i^\top X^\top_i \left\{ - \frac{1}{2\sqrt{n}} \sum^n_{j=1}(X_j X_j^\top - \textbf{I}_p) \right\} t  - \frac{e^{\|t\|^2/2}}{\sqrt{n}} (\textbf{I}_p + tt^\top) \sum^n_{i=1} t^\top X_i \\
			& \quad + o_p(1).
		\end{align*}
		Denote $\tilde{C}_n:= - \frac{1}{2\sqrt{n}} \sum^n_{j=1}(X_j X_j^\top - \textbf{I}_p)$.
		Note that
		\begin{align*}
			&	\frac{1}{n}\sum^n_{i=1} e^{t^\top X_i} X_i X_i^\top X^\top_i \tilde{C}_n t\\
			&= \frac{1}{n}\sum^n_{i=1} e^{t^\top X_i} X_i X_i^\top \sum^p_{h=1}\sum^p_{k=1} X_{ih} \tilde{C}_{n,hk}t_k\\
			&= \sum^p_{h=1}\left\{ \left( \frac{1}{n}\sum^n_{i=1} e^{t^\top X_i} X_i X_i^\top X_{ih} \right) \sum^p_{k=1} t_k \tilde{C}_{n,hk}\right\}\\
			&= \sum^p_{h=1}\left\{ \mathbb{E}(e^{t^\top X} XX^\top X_h) \sum^p_{k=1} t_k \tilde{C}_{n,hk}\right\} +o_p(1)\\
			&= - \frac{1}{2\sqrt{n}}\sum^n_{i=1} \left\{ \sum^p_{h=1} \sum^p_{k=1} \mathbb{E}(e^{t^\top X} XX^\top X_h) t_k (X_{ih} X_{ik} - \mathbbm{1}(h=k)) \right\} + o_p(1).
		\end{align*}
		Thus, by (\ref{eq:EXXXh}), 
		\begin{align*}
			C_{4n} 
			&= - \frac{e^{\|t\|^2/2}}{2\sqrt{n}}\sum^n_{i=1} \left\{ \sum^p_{h=1} \sum^p_{k=1} \left( \frac{\partial tt^\top}{\partial t_h} + tt^\top t_h + t_h \textbf{I}_p \right) t_k (X_{ih} X_{ik} - \mathbbm{1}(h=k)) \right\} \\
			& \quad - \frac{e^{\|t\|^2/2}}{\sqrt{n}} (\textbf{I}_p + tt^\top) \sum^n_{i=1} t^\top X_i + o_p(1).
		\end{align*}
	\end{enumerate}
	
\end{proof}

\begin{proof}[Proof of Lemma \ref{lemma:Mn_I}]
	\begin{enumerate}[(a)]
		\item We can write
		\begin{equation*}
			M^{(n)}_{0}(t) H_{M^{(n)}_{0}}(t) - \triangledown M^{(n)}_{0}(t) (\triangledown M^{(n)}_{0}(t))^\top - (M^{(n)}_{0}(t))^2 \textbf{I}_p = E_{1n} + E_{2n} + E_{3n},
		\end{equation*}
		where
		\begin{align*}
			E_{1n} &:= M_0(t)H_{M^{(n)}_{0}}(t) - \triangledown M_0(t) (\triangledown M^{(n)}_{0}(t))^\top - M_0(t) M^{(n)}_{0}(t) \textbf{I}_p; \\
			E_{2n} &:= \{M^{(n)}_{0}(t) - M_0(t)\} H_{M_0}(t) - \{\triangledown M^{(n)}_{0}(t) - \triangledown M_0(t)\} (\triangledown M_{0}(t))^\top \\
			& \quad - \{M^{(n)}_{0}(t) - M_0(t)\} M_{0}(t) \textbf{I}_p; \\
			E_{3n} &:= \{M^{(n)}_{0}(t) - M_0(t)\} \{H_{M^{(n)}_{0}}(t) - H_{M_{0}}(t)\} \\
			& \quad - \{\triangledown M^{(n)}_{0}(t) -  \triangledown M_0(t)\} \{ \triangledown  M^{(n)}_{0}(t)  - \triangledown  M_0(t) \}^\top\\
			&\quad  - \{M^{(n)}_{0}(t) - M_0(t)\} \{M^{(n)}_{0}(t) - M_0(t)\} \textbf{I}_p.
		\end{align*}
		For $E_{1n}$, we have
		\begin{align*}
			E_{1n} &= \frac{1}{n}\sum^n_{i=1} \bigg( e^{\|t\|^2/2}X_iX_i^\top e^{t^\top X_i} - e^{\|t\|^2/2}t X_i^\top e^{t^\top X_i} - e^{\|t\|^2/2}e^{t^\top X_i} \textbf{I}_p \bigg)\\
			&= \frac{e^{\|t\|^2/2}}{n} \sum^n_{i=1} e^{t^\top X_i} (X_i X_i^\top - t X_i^\top - \textbf{I}_p).
		\end{align*}
		For $E_{2n}$, we have
		\begin{align*}
			E_{2n}&:= \frac{1}{n}\sum^n_{i=1}(e^{t^\top X_i} - e^{\|t\|^2/2})\{ H_{M_0}(t) - M_0(t)\textbf{I}_p\} \\
			& \quad -  \frac{1}{n}\sum^n_{i=1}(X_i e^{t^\top 
				X_i} - t e^{\|t\|^2/2}) (\triangledown M_0(t))^\top	\\
			&= \frac{e^{\|t\|^2/2}}{n} \sum^n_{i=1} \left\{ (e^{t^\top X_i} - e^{\|t\|^2/2}) tt^\top  - (X_it^\top e^{t^\top X_i} - tt^\top e^{\|t\|^2/2}) \right\} \\
			&= \frac{e^{\|t\|^2/2}}{n} \sum^n_{i=1} \left\{ e^{t^\top X_i} (tt^\top - X_it^\top)  \right\}.
		\end{align*}
		By Lemma \ref{lemma:Mn_Mn0},  $E_{3n} = O_p(n^{-1})$. Thus, the result follows.
		
		\item 
		Denote $F_{1n} := M^{(n)}(t) H_{M^{(n)}}(t)$, $F_{2n} := \triangledown M^{(n)}(t) (\triangledown M^{(n)}(t))^\top$ and $F_{3n} :=  (M^{(n)}(t))^2 \textbf{I}_p$. Then,
		\begin{equation*}
			M^{(n)}(t) H_{M^{(n)}}(t) - \triangledown M^{(n)}(t) (\triangledown M^{(n)}(t))^\top - (M^{(n)}(t))^2 \textbf{I}_p = F_{1n} - F_{2n} - F_{3n}.
		\end{equation*}
		Recall that $\mathbb{E}(f_j(X_i;t)) = 0$, by the central limit theorem $n^{-1}\sum^n_{i=1}f_j(X_i;t) = O_p(n^{-1/2})$ for $j=0,1,2$. By Lemma \ref{lemma:Mn_Mn0}, 
		\begin{align*}
			F_{1n} &= \left( M^{(n)}_{0}(t) + \frac{1}{n}\sum^n_{i=1}f_0(X_i;t) + o_p(n^{-1/2})\right)
			\left(H_{M^{(n)}_{0}}(t) + \frac{1}{n}\sum^n_{i=1}f_2(X_i;t) + o_p(n^{-1/2})\right) \\
			&= M^{(n)}_{0}(t) H_{M^{(n)}_{0}}(t)+ M^{(n)}_{0}(t) \left(\frac{1}{n}\sum^n_{i=1} f_2(X_i;t)\right) + \left(\frac{1}{n}\sum^n_{i=1} f_0(X_i;t)\right) H_{M^{(n)}_{0}}(t) + o_p(n^{-1/2}) \\
			&= M^{(n)}_{0}(t) H_{M^{(n)}_{0}}(t)+ M_{0}(t) \left(\frac{1}{n}\sum^n_{i=1} f_2(X_i;t)\right) + \left(\frac{1}{n}\sum^n_{i=1} f_0(X_i;t)\right) H_{M_{0}}(t) + o_p(n^{-1/2}),
		\end{align*}
		where the last equality holds as $(M^{(n)}_{0}(t) - M_0(t))\left( n^{-1}\sum^n_{i=1}f_2(X_i;t) \right)= O_p(n^{-1/2}) O_p(n^{-1/2}) = O_p(n^{-1})$ and $\left( n^{-1}\sum^n_{i=1}f_0(X_i;t) \right)(H_{M^{(n)}_{0}} (t) - H_{M_0}(t)) =  O_p(n^{-1/2}) O_p(n^{-1/2}) = O_p(n^{-1})$, by the central limit theorem. Thus,
		\begin{equation}\label{eq:lemma_Bn1}
			F_{1n} = M^{(n)}_{0}(t) H_{M^{(n)}_{0}}(t) + \frac{e^{\|t\|^2/2}}{n}\sum^n_{i=1} \left\{
			f_2(X_i;t) + (tt^\top + \textbf{I}_p) f_0(X_i;t)	\right\} + o_p(n^{-1/2}).
		\end{equation}
		Similarly, by  Lemma \ref{lemma:Mn_Mn0}, 
		\begin{align*}
F_{2n} 	&= \left(\triangledown M^{(n)}_{0}(t) + \frac{1}{n}\sum^n_{i=1}f_1(X_i;t) + o_p(n^{-1/2})\right)\\
				&\quad \quad 
				\left( \triangledown M^{(n)}_{0}(t) + \frac{1}{n}\sum^n_{i=1}f_1(X_i;t) + o_p(n^{-1/2})\right)^\top \\
			&= \triangledown M^{(n)}_{0}(t) (\triangledown M^{(n)}_{0}(t))^\top + \triangledown M^{(n)}_{0}(t) \left(\frac{1}{n}\sum^n_{i=1}f_1(X_i;t)^\top \right) \\
			& \quad + \left( \frac{1}{n}\sum^n_{i=1}f_1(X_i;t)\right) (\triangledown M^{(n)}_{0}(t))^\top + o_p(n^{-1/2}) \\
			&=\triangledown M^{(n)}_{0}(t) (\triangledown M^{(n)}_{0}(t))^\top + \triangledown M_{0}(t) \left( \frac{1}{n}\sum^n_{i=1}f_1(X_i;t)^\top \right)\\
			& \quad + \left( \frac{1}{n}\sum^n_{i=1}f_1(X_i;t)\right) (\triangledown M_{0}(t))^\top   + o_p(n^{-1/2}).
		\end{align*}
		Thus,
		\begin{equation}\label{eq:lemma_Bn2}
			F_{n2} = \triangledown M^{(n)}_{0}(t) (\triangledown M^{(n)}_{0}(t))^\top + \frac{e^{\|t\|^2/2}}{n}\sum^n_{i=1} \left(
			t f^\top_1(X_i;t) + f_1(X_i;t)t^\top 
			\right) + o_p(n^{-1/2}).
		\end{equation}
		Using the same argument, we have
		\begin{align}
			F_{3n} &=
			\left( M^{(n)}_{0}(t) + \frac{1}{n}\sum^n_{i=1}f_0(X_i;t) + o_p(n^{-1/2}) \right)^2\textbf{I}_p \nonumber \\
			&=\left( \left(M^{(n)}_{0}(t)\right)^2 + \frac{2 M^{(n)}_{0}(t)}{n} \sum^n_{i=1}f_0(X_i;t) + o_p(n^{-1/2})\right) \textbf{I}_p  \nonumber\\
			&=	\left( M^{(n)}_{0}(t)\right)^2  \textbf{I}_p + \frac{e^{\|t\|^2/2}}{n}\sum^n_{i=1}(2 f_0(X_i;t) \textbf{I}_p ) + o_p(n^{-1/2}) \label{eq:lemma_Bn3}.
		\end{align}
		The result follows by combining (\ref{eq:lemma_Bn1}), (\ref{eq:lemma_Bn2}), and (\ref{eq:lemma_Bn3}) and part (a) of this lemma.
		
	\end{enumerate}
	
\end{proof}

\begin{proof}[Proof of Theorem \ref{thm:H_I}]
	By Lemma \ref{lemma:Mn_Mn0} (a) and Lemma \ref{lemma:Mn_I} (b), we have
	\begin{align*}
		&	\sqrt{n}(H_{\Lambda_n}(t) - \textbf{I}_p)\\
		&= (M^{(n)}(t))^{-2} \sqrt{n} \{ M^{(n)}(t)H_{M^{(n)}}(t) - \triangledown M^{(n)}(t) (\triangledown  M^{(n)}(t))^\top - (M^{(n)})^2(t) \textbf{I}_p\}\\
		&= M^{-2}_0(t) \sqrt{n} \{ M^{(n)}(t)H_{M^{(n)}}(t) - \triangledown M^{(n)}(t) (\triangledown  M^{(n)}(t))^\top - (M^{(n)})^2(t) \textbf{I}_p\} \\
		&\quad + \{(M^{(n)}(t))^{-2}-M^{-2}_0(t)\} \sqrt{n} \{ M^{(n)}(t)H_{M^{(n)}}(t) - \triangledown  M^{(n)}(t) (\triangledown  M^{(n)}(t))^\top - (M^{(n)})^2(t) \textbf{I}_p\}\\
		&= \frac{1}{\sqrt{n}}\sum^n_{i=1} h(X_i;t) +o_p(1) +O_p(n^{-1/2})O_p(1).
	\end{align*}
	Thus, the results follow.
\end{proof}

	\subsection{Proofs for Section \ref{sect:consistency}}
The following lemma is a generalization of Proposition 5 in  \cite{henze2020testing}, where that proposition corresponds to the case when $A = \textbf{I}_p$.
	\begin{lemma}\label{lemma:matrix_sqrt_conv}
		Let $A_n$ be a sequence of $p \times p$ symmetric positive definite matrices, $A$ be a $p \times p$ symmetric positive definite matrix, and $b_n$ be an increasing sequence of positive real numbers satisfying $\lim_{n\to\infty}b_n = \infty$. Suppose that $			\lim_{n\to\infty}b_n\|A_n - A\|_2 = 0$, then
		\begin{equation*}
			\lim_{n\to\infty}b_n\|A_n^{-1/2} - A^{-1/2}\|_2 = 0. 		
		\end{equation*}
		
	\end{lemma}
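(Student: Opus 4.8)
The plan is to reduce the matrix statement to a scalar fact about the map $x \mapsto x^{-1/2}$ on the spectrum and then control everything via the operator (spectral) norm. First I would note that since $A$ is symmetric positive definite, its eigenvalues lie in an interval $[\lambda_{\min}, \lambda_{\max}]$ with $\lambda_{\min} > 0$. Because $b_n \|A_n - A\|_2 \to 0$ and $b_n \to \infty$, in particular $\|A_n - A\|_2 \to 0$, so by Weyl's inequality the eigenvalues of $A_n$ converge to those of $A$; hence for all $n$ large enough the eigenvalues of $A_n$ are trapped in a fixed compact interval $[c, C] \subset (0, \infty)$ (e.g.\ $c = \lambda_{\min}/2$, $C = 2\lambda_{\max}$), and the same holds for $A$. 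We may discard the finitely many small $n$ without affecting a limit statement.

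Next I would invoke an integral (or power-series) representation of the function $f(x) = x^{-1/2}$ that turns it into something Lipschitz on $[c,C]$ in a way that lifts to matrices. Concretely, on $[c,C]$ the scalar map $x \mapsto x^{-1/2}$ is Lipschitz with some constant $L = L(c,C) = \tfrac12 c^{-3/2}$. The key step is to promote this to the operator-norm bound
\begin{equation*}
	\|A_n^{-1/2} - A^{-1/2}\|_2 \le L \, \|A_n - A\|_2
\end{equation*}
valid once both spectra lie in $[c,C]$. One clean way to get this is via the integral formula $x^{-1/2} = \frac{1}{\pi}\int_0^\infty (x + s)^{-1} s^{-1/2}\, ds$, which for a positive definite matrix $B$ reads $B^{-1/2} = \frac{1}{\pi}\int_0^\infty (B + s\textbf{I}_p)^{-1} s^{-1/2}\, ds$; then
\begin{equation*}
	A_n^{-1/2} - A^{-1/2} = \frac{1}{\pi}\int_0^\infty (A_n + s\textbf{I}_p)^{-1}(A - A_n)(A + s\textbf{I}_p)^{-1} s^{-1/2}\, ds,
\end{equation*}
and using $\|(A_n + s\textbf{I}_p)^{-1}\|_2 \le (c+s)^{-1}$, $\|(A + s\textbf{I}_p)^{-1}\|_2 \le (c+s)^{-1}$ together with $\int_0^\infty s^{-1/2}(c+s)^{-2}\, ds = \tfrac{\pi}{2} c^{-3/2} < \infty$ gives exactly the Lipschitz bound with $L = \tfrac12 c^{-3/2}$. (Alternatively one can follow the argument of Proposition~5 in \cite{henze2020testing} verbatim, replacing $\textbf{I}_p$ by $A$ throughout; since $A$ is fixed and positive definite this generalization is routine.)

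Finally I would multiply through by $b_n$: for $n$ large,
\begin{equation*}
	b_n \|A_n^{-1/2} - A^{-1/2}\|_2 \le L\, b_n \|A_n - A\|_2 \longrightarrow 0
\end{equation*}
by hypothesis, which is the claim. The main obstacle is the middle step — establishing the uniform operator-norm Lipschitz estimate for $x \mapsto x^{-1/2}$ on matrices with spectra in a fixed compact subset of $(0,\infty)$; the subtlety is that matrix functions are not automatically Lipschitz with the scalar constant unless one uses a representation (like the resolvent integral above) that respects non-commutativity, so I would be careful to phrase that step through the integral formula or the Löwner/perturbation-theoretic argument rather than a naive eigenvalue manipulation. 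Everything else (eigenvalue localization via Weyl, discarding finitely many terms, the final multiplication by $b_n$) is straightforward.
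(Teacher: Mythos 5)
Your proof is correct, but it takes a genuinely different route from the paper's. The paper argues purely algebraically: it starts from the factorization $A_n - A = (A_n^{1/2}+A^{1/2})(A_n^{1/2}-A^{1/2}) = (A_n^{1/2}+A^{1/2})A_n^{1/2}(A^{-1/2}-A_n^{-1/2})A^{1/2}$, isolates $b_n\|A^{-1/2}-A_n^{-1/2}\|_2$ as a product of $b_n\|A_n-A\|_2$ with $\|A_n^{-1/2}\|_2$, $\|(A_n^{1/2}+A^{1/2})^{-1}\|_2$ and $\|A^{-1/2}\|_2$, and then bounds each of those factors using Weyl's inequality (essentially the same eigenvalue localization you perform). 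Your route instead goes through the resolvent integral $B^{-1/2}=\frac{1}{\pi}\int_0^\infty (B+s\textbf{I}_p)^{-1}s^{-1/2}\,ds$ and the second resolvent identity, yielding the explicit Lipschitz bound $\|A_n^{-1/2}-A^{-1/2}\|_2 \le \tfrac12 c^{-3/2}\|A_n-A\|_2$; your computation of $\int_0^\infty s^{-1/2}(c+s)^{-2}ds = \tfrac{\pi}{2}c^{-3/2}$ is correct. What your approach buys is exactly the point you flag yourself: the integral representation respects non-commutativity, whereas the paper's opening identity $A_n - A = (A_n^{1/2}+A^{1/2})(A_n^{1/2}-A^{1/2})$ implicitly requires $A_n^{1/2}$ and $A^{1/2}$ to commute (expanding the product leaves the cross terms $A^{1/2}A_n^{1/2}-A_n^{1/2}A^{1/2}$), so your version is arguably the more watertight one; it also produces a quantitative Lipschitz constant rather than just boundedness of auxiliary factors. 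What the paper's approach buys is brevity and the fact that it is a direct one-line generalization of Proposition 5 of Henze and Visagie, with no integral representation needed.
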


	\begin{proof}[Proof of Lemma \ref{lemma:matrix_sqrt_conv}]
 Note that 
		\begin{align*}
			A_n - A =&~ (A_n^{1/2} + A^{1/2})(A_n^{1/2} - A^{1/2})\\
			=&~ (A_n^{1/2} + A^{1/2})A_n^{1/2}(A^{-1/2} - A_n^{-1/2})A^{1/2}.
		\end{align*}
		Thus, we have
		\begin{equation}\label{eq:matrix_key_inq}
			b_n\|A^{-1/2} - A_n^{-1/2}\|_2 
			\leq \|A_n^{-1/2}\|_2 \cdot
			\|(A_n^{1/2} + A^{1/2})^{-1}\|_2 \cdot
			\|A^{-1/2}\|_2 \cdot 	 (b_n \|A_n - A\|_2).
		\end{equation}
		Since $\lim_{n \rightarrow \infty}b_n\|A_n - A\|_2 = 0$, it remains to show that the other terms on the RHS of  (\ref{eq:matrix_key_inq}) are bounded. Clearly, $\|A^{-1/2}\|_2 < \infty$.
		Denote $\lambda_{\min}(B)$ to be the smallest eigenvalue of a matrix $B$. Let $\lambda_1 := \lambda_{\min}(A) > 0$ since $A$ is positive definite. Since $\|\cdot\|_2$ is the spectral norm, by Weyl's inequality, we have
		\begin{align}\label{eq:40} 
			\|(A_n^{1/2} + A^{1/2})^{-1}\|_2
			= \frac{1}{\lambda_{\min}\left(A^{1/2}_n + A^{1/2}\right)} 
			\leq \frac{1}{\lambda_{\min}(A^{1/2}_n) + \lambda_1^{1/2}} \leq \frac{1}{\lambda^{1/2}_1} < \infty. 
		\end{align}
		Let $F_n := A - A_n$. 	Since $\lim_{n \rightarrow \infty} b_n\|A_n - A\|_2 = 0$, there exists $n_0 \in \mathbb{N}$ such that for all $n \geq n_0$, $\|F_n\|_2 = \|A_n - A\|_2 \leq \lambda_1/2$. 
		For all $n \geq n_0$, by Weyl's inequality again,
		\begin{align*}
			\|A_n^{-1}\|_2
			=&~ \frac{1}{\lambda_{\min}(A_n)} =  \frac{1}{\lambda_{\min}(A - F_n)} \leq \frac{1}{\lambda_1 + \lambda_{\min}(-F_n)} \\
			\leq&~ \frac{1}{\lambda_1 - \|F_n\|_2 } \leq  \frac{1}{\lambda_1 - \lambda_1/2} =  \frac{2}{\lambda_1} <\infty.
		\end{align*}
		This implies that $\|A_n^{-1/2}\|_2\leq \sqrt{2/\lambda_1} < \infty$ for all $n \geq n_0$.
	\end{proof}
	
Let $\tilde{X}_i := \Sigma^{-1/2}X_i$ for $i=1,\ldots,n$. Denote $\tilde{M}^{(n)}(t) := \frac{1}{n}\sum^n_{i=1} e^{t^\top \tilde{X}_i}$, $\triangledown \tilde{M}^{(n)}(t) := \frac{1}{n}\sum^n_{i=1} \tilde{X}_i e^{t^\top \tilde{X}_i}$ and $H_{\tilde{M}^{(n)}}(t) := \frac{1}{n}\sum^n_{i=1} \tilde{X}_i \tilde{X} _i e^{t^\top \tilde{X}_i}$.
	\begin{lemma}\label{lemma:alternative_M}
		Suppose that the moment generating function of $X$ exists and is twice differentiable. 	We have 
		\begin{enumerate}[(a)]
			\item $M^{(n)}(t) - \tilde{M}^{(n)}(t) \stackrel{a.s.}{\rightarrow} 0$;
			
			\item $\| \triangledown M^{(n)}(t) - \triangledown  \tilde{M}^{(n)}(t) \| \stackrel{a.s.}{\rightarrow} 0$;
			
			\item $\|H_{M^{(n)}}(t) - H_{\tilde{M}^{(n)}}(t)\|_2 \stackrel{a.s.}{\rightarrow} 0$.
			
		\end{enumerate}
	\end{lemma}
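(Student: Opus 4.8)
The plan is to write each scaled residual as $Z_{n,i} = \tilde X_i + \delta_{n,i}$, where, using the normalization $\mathbb{E}(X) = 0$ and $\tilde X_i := \Sigma^{-1/2} X_i$,
\[
	\delta_{n,i} := Z_{n,i} - \tilde X_i = (S^{-1/2}_n - \Sigma^{-1/2}) X_i - S^{-1/2}_n \overline{X}_n ,
\]
and then to reduce all three parts to the single almost-sure claim $\epsilon_n := \max_{1 \le i \le n} \|\delta_{n,i}\| \rightarrow 0$. Everything that follows is built from the elementary inequality $|e^{x} - 1| \le |x| e^{|x|}$, Cauchy--Schwarz in the form $|t^\top \delta_{n,i}| \le \|t\|\,\epsilon_n$, the triangle inequality for the Euclidean and spectral norms, and the strong law of large numbers (SLLN) applied to a handful of averages of i.i.d.\ terms.

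First I would establish $\epsilon_n \rightarrow 0$ almost surely. By the SLLN, $S_n \rightarrow \Sigma$ a.s., so $S_n$ is positive definite (and $Z_{n,i}$ well defined) for all large $n$ with probability one; moreover, since the existence of the moment generating function forces all moments of $X$ to be finite, a standard almost-sure rate---for instance the law of the iterated logarithm applied entrywise to $S_n$, together with $\|\overline{X}_n\| = O((n^{-1} \log \log n)^{1/2})$ a.s.---gives $\|S_n - \Sigma\|_2 = O((n^{-1} \log \log n)^{1/2})$ a.s. Applying Lemma \ref{lemma:matrix_sqrt_conv} with $A_n = S_n$, $A = \Sigma$ and $b_n = \log n$ then yields $(\log n)\, \|S^{-1/2}_n - \Sigma^{-1/2}\|_2 \rightarrow 0$ a.s., and in particular $\|S^{-1/2}_n\|_2 = O(1)$ a.s. On the other hand, $\mathbb{E}(e^{t^\top X}) < \infty$ near the origin implies $\mathbb{E}(e^{c\|X\|}) < \infty$ for some $c > 0$, whence a Borel--Cantelli argument gives $\max_{1 \le i \le n} \|X_i\| = O(\log n)$ a.s.\ (cf.\ Proposition A.1 of \cite{henze2019characterizations}). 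Combining, $\epsilon_n \le \|S^{-1/2}_n - \Sigma^{-1/2}\|_2 \max_{i \le n} \|X_i\| + \|S^{-1/2}_n\|_2 \|\overline{X}_n\| \rightarrow 0$ a.s.

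For part (a), I would bound
\[
	|M^{(n)}(t) - \tilde M^{(n)}(t)| \le \frac{1}{n} \sum_{i=1}^n e^{t^\top \tilde X_i} |e^{t^\top \delta_{n,i}} - 1| \le \|t\|\, \epsilon_n e^{\|t\| \epsilon_n}\, \tilde M^{(n)}(t) ,
\]
which tends to $0$ a.s.\ because $\tilde M^{(n)}(t) \rightarrow \tilde M(t) < \infty$ a.s.\ by the SLLN (finiteness from the moment generating function assumption) and $\epsilon_n \rightarrow 0$ a.s. For parts (b) and (c) I would use the pointwise identities
\[
	Z_{n,i} e^{t^\top Z_{n,i}} - \tilde X_i e^{t^\top \tilde X_i} = \tilde X_i e^{t^\top \tilde X_i} (e^{t^\top \delta_{n,i}} - 1) + \delta_{n,i} e^{t^\top \tilde X_i} e^{t^\top \delta_{n,i}}
\]
and, after expanding $Z_{n,i} Z_{n,i}^\top = \tilde X_i \tilde X_i^\top + \tilde X_i \delta_{n,i}^\top + \delta_{n,i} \tilde X_i^\top + \delta_{n,i} \delta_{n,i}^\top$,
\[
	Z_{n,i} Z_{n,i}^\top e^{t^\top Z_{n,i}} - \tilde X_i \tilde X_i^\top e^{t^\top \tilde X_i} = \tilde X_i \tilde X_i^\top e^{t^\top \tilde X_i} (e^{t^\top \delta_{n,i}} - 1) + (\tilde X_i \delta_{n,i}^\top + \delta_{n,i} \tilde X_i^\top + \delta_{n,i} \delta_{n,i}^\top) e^{t^\top \tilde X_i} e^{t^\top \delta_{n,i}} .
\]
Taking Euclidean (respectively spectral) norms with $\|\tilde X_i \delta_{n,i}^\top\|_2 \le \|\tilde X_i\| \epsilon_n$, $\|\delta_{n,i} \delta_{n,i}^\top\|_2 \le \epsilon_n^2$, $|e^{t^\top \delta_{n,i}} - 1| \le \|t\| \epsilon_n e^{\|t\| \epsilon_n}$ and $e^{t^\top \delta_{n,i}} \le e^{\|t\| \epsilon_n}$, and then averaging over $i$, one is left with a finite sum of terms of the form $(\text{const} \cdot \epsilon_n e^{\|t\| \epsilon_n}) \frac{1}{n} \sum_{i=1}^n \|\tilde X_i\|^{k} e^{t^\top \tilde X_i}$ with $k \in \{0, 1, 2\}$; each such average converges a.s.\ to the finite constant $\mathbb{E}(\|\tilde X\|^{k} e^{t^\top \tilde X})$ by the SLLN (finiteness again from the moment generating function assumption, via $\|\tilde X\|^{k} e^{t^\top \tilde X} \le c_{k} e^{(\|t\| + 1) \|\tilde X\|}$), while the prefactor is $O(\epsilon_n) \rightarrow 0$ a.s., giving (b) and (c). The step I expect to be the main obstacle is the almost-sure statement $\epsilon_n \rightarrow 0$: the null-case tools (\ref{eq:S_squareroot_I}) and (\ref{eq:t_Delta_ni2}) are unavailable under a general alternative, so one must instead supply a genuine convergence rate for $\|S_n - \Sigma\|_2$ that beats the $O(\log n)$ growth of $\max_{i \le n} \|X_i\|$ and channel it through Lemma \ref{lemma:matrix_sqrt_conv}; once $\epsilon_n \rightarrow 0$ a.s.\ is secured, the remaining estimates are routine.
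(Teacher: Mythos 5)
Your proposal is correct and follows essentially the same route as the paper: the same decomposition $Z_{n,i}=\tilde X_i+\tilde\Delta_{n,i}$, the same reduction to $\max_i\|\tilde\Delta_{n,i}\|\to 0$ a.s.\ via Lemma \ref{lemma:matrix_sqrt_conv}, and the same SLLN bounds on averages of $\|\tilde X_i\|^k e^{t^\top\tilde X_i}$. The only (immaterial) difference is in how the key rate is obtained: the paper pairs Kolmogorov's variance criterion ($n^{1/2-\varepsilon}\|S_n-\Sigma\|_2\to 0$) with $n^{-1/4}\max_i\|X_i\|\to 0$ from fourth moments, whereas you pair a law-of-the-iterated-logarithm rate with $\max_i\|X_i\|=O(\log n)$ from exponential moments; both are valid under the stated hypothesis.
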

	
	\begin{proof}[Proof of Lemma \ref{lemma:alternative_M}]
		Define
		\begin{equation*}
			\tilde{\Delta}_{n,i} := Z_{n,i} - \tilde{X}_i = (S^{-1/2}_n - \Sigma^{-1/2})X_i - S^{-1/2}_n \overline{X}_n.
		\end{equation*}
		Let $\xi_n := \max_{i=1,\ldots,n}\| \tilde{\Delta}_{n,i}\|$. Then,
		\begin{equation}\label{eq:xi_n_ineq}
			\xi_n \leq n^{1/4} \|S^{-1/2}_n - \Sigma^{-1/2}\|_2 \cdot n^{-1/4} \max_{i=1,\ldots,n}\|X_j\| + \|S^{-1/2}_n\|_2 \|\overline{X}_n\|.
		\end{equation}
		Since the existence of the moment generating function implies that $\mathbb{E}\|X\|^4 < \infty$, Theorem 5.2 of \cite{barndorff1963limit} gives $n^{-1/4}\max_{i=1,\ldots,n}\|X_i\| \rightarrow 0$ almost surely.
		As $S_n - \Sigma = n^{-1}\sum^n_{i=1}(X_i X_i^\top - \textbf{I}_p) - \overline{X}_n \overline{X}^\top_n$, Kolmogorov's variance criterion for averages (see \cite{kallenberg2021foundations} p.113) implies that $n^{1/2-\varepsilon}\|S_n - \Sigma\|_2 \rightarrow 0$ almost surely for any $\varepsilon > 0$. Lemma \ref{lemma:matrix_sqrt_conv} then yields $n^{1/2-\varepsilon}\|S^{-1/2}_n - \Sigma^{-1/2}\|_2 \rightarrow 0$ almost surely. From the proof of Lemma \ref{lemma:matrix_sqrt_conv}, we also know that $\sup_{n\geq 1}\|S^{-1/2}_n\|_2 < \infty$ almost surely. By the strong law of large numbers, $\|\overline{X}_n\| \rightarrow 0$ almost surely. In view of (\ref{eq:xi_n_ineq}),  with probability one,
		\begin{equation*}
			\xi_n \leq o(n^{-1/4+\varepsilon}) o(1) + o(1) = o(1)
		\end{equation*}
	for $\varepsilon < 1/4$. Thus,
		\begin{equation}\label{eq:xi_n_limit}
			\lim_{n \rightarrow \infty} \xi_n  = 0, \quad \text{a.s.}
		\end{equation}
		By Taylor's theorem, for some $|\theta_{n,i}(t)|\leq 1$, we have
		\begin{equation}\label{eq:Taylor_Deltao}
			e^{t^\top \tilde{\Delta}_{n,i}}  = 1 + t^\top \tilde{\Delta}_{n,i} + \frac{1}{2} (t^\top \tilde{\Delta}_{n,i})^2 e^{\theta_{n,i}(t) t^\top \tilde{\Delta}_{n,i}}.
		\end{equation}
		 By (\ref{eq:Taylor_Deltao}),
			\begin{align*}
				M^{(n)}(t) - \tilde{M}^{(n)}(t) 	&=\frac{1}{n}\sum^n_{i=1} e^{t^\top \tilde{X}_i}( e^{t^\top \tilde{\Delta}_{n,i}} - 1)\\
				&=\frac{1}{n} \sum^n_{i=1} e^{t^\top \tilde{X}_i} \left\{ t^\top \tilde{\Delta}_{n,i} + \frac{1}{2}(t^\top \tilde{\Delta}_{n,i})^2 e^{\theta_{n,i}(t) t^\top \tilde{\Delta}_{n,i}} \right\}.
			\end{align*}
			Thus, as $\xi_n = \max_{i=1,\ldots,n} \|\tilde{\Delta}_{n,i}\|$,
			\begin{align}
				|M^{(n)}(t)  - \tilde{M}^{(n)}(t)| & \leq \left| \tilde{M}^{(n)}(t) \right| \max_{i=1,\ldots,n} \left| 
				t^\top \tilde{\Delta}_{n,i} + \frac{1}{2}(t^\top \tilde{\Delta}_{n,i})^2 e^{\theta_{n,i}(t) t^\top \tilde{\Delta}_{n,i}} \right| \nonumber \\
				& \leq \left|\tilde{M}^{(n)}(t)\right| \max_{i=1,\ldots,n}  \bigg(\|t\| \cdot \|\tilde{\Delta}_{n,i}\|
				+ \frac{1}{2} \|t\|^2 \cdot \| \tilde{\Delta}_{n,i}\|^2 e^{\theta_{n,i}(t) \|t\| \cdot \| \tilde{\Delta}_{n,i}\|}\bigg) \nonumber  \\
				& \leq \left|\tilde{M}^{(n)}(t) \right| \bigg(\|t\| \xi_n
				+ \frac{1}{2} \|t\|^2 \xi_n^2 e^{ \|t\| \xi_n}\bigg). \label{eq:consistency_lemma_M}
			\end{align}
Recall that $\tilde{M}(t) := \mathbb{E}(e^{t^\top \tilde{X}})$.
			By the strong law of large numbers, $\tilde{M}^{(n)}(t) \stackrel{a.s.}{\rightarrow} \tilde{M}(t)$. Combining (\ref{eq:consistency_lemma_M}) and (\ref{eq:xi_n_limit}), we deduce that
			\begin{equation}
				M^{(n)}(t) - \tilde{M}^{(n)}(t) \stackrel{a.s.}{\rightarrow}  0.
			\end{equation}
Similarly, using (\ref{eq:Taylor_Deltao}), we have
			\begin{align*}
				& \triangledown M^{(n)}(t) - \triangledown \tilde{M}^{(n)}(t) \\
				&= \frac{1}{n}\sum^n_{i=1}(Z_{n,i} - \tilde{X}_i) e^{t^\top Z_{n,i}}+ \frac{1}{n}\sum^n_{i=1}\tilde{X}_i (e^{t^\top Z_{n,i}} - e^{t^\top \tilde{X}_i} ) \\
				&= \frac{1}{n}\sum^n_{i=1} e^{t^\top \tilde{X}_i} \tilde{\Delta}_{n,i}\cdot e^{t^\top \tilde{\Delta}_{n,i}} + 
				\frac{1}{n}\sum^n_{i=1} \tilde{X}_i e^{t^\top \tilde{X}_i} \left\{ t^\top \tilde{\Delta}_{n,i} + \frac{1}{2} (t^\top \tilde{\Delta}_{n,i})^2 e^{\theta_{n,i}(t) t^\top \tilde{\Delta}_{n,i}}\right\}.
			\end{align*}
By the facts that $\tilde{M}^{(n)}(t) \stackrel{a.s.}{\rightarrow} \tilde{M}(t)$, $\|\triangledown \tilde{M}^{(n)}(t)\| \stackrel{a.s.}{\rightarrow} \| \triangledown \tilde{M}(t)\|$, and (\ref{eq:xi_n_limit}), we obtain that
			\begin{align*}
				&\left\| \triangledown M^{(n)}(t) - \triangledown \tilde{M}^{(n)}(t) \right\| \\
				& \leq |\tilde{M}^{(n)}(t)| \max_{i=1,\ldots,n} \|\tilde{\Delta}_{n,i} e^{t^\top \tilde{\Delta}_{n,i}}\| + \|\triangledown \tilde{M}_{n}(t)\| \max_{i=1,\ldots,n} \left| t^\top \tilde{\Delta}_{n,i} + \frac{1}{2} (t^\top \tilde{\Delta}_{n,i})^2 e^{\theta_{n,i}(t) t^\top \tilde{\Delta}_{n,i}}\right|\\
				& \leq |\tilde{M}^{(n)}(t)| \xi_n e^{\|t\| \xi_n} + \|\triangledown \tilde{M}_{n}(t)\| \left(\|t\| \xi_n + \frac{1}{2} \|t\|^2\xi_n^2 e^{\|t\| \xi_n} \right) \stackrel{a.s}{\rightarrow} 0.
			\end{align*}
			Similar to the proof of Lemma \ref{lemma:Mn_Mn0},  we expand
			\begin{align*}
				H_{M^{(n)}}(t) - H_{\tilde{M}^{(n)}}(t) &=: J_{1n} + J_{2n} + J_{3n} + J_{4n},
			\end{align*}
			where
			\begin{align*}
				J_{1n} &:= \frac{1}{n}\sum^n_{i=1} \tilde{\Delta}_{n,i}\tilde{\Delta}_{n,i}^\top e^{t^\top \tilde{X}_i} e^{t^\top \tilde{\Delta}_{n,i}};\\
				J_{2n} &:= \frac{1}{n}\sum^n_{i=1} \tilde{\Delta}_{n,i}\tilde{X}_i^\top e^{t^\top \tilde{X}_i} e^{t^\top \tilde{\Delta}_{n,i}};\\
				J_{3n} &:= \frac{1}{n}\sum^n_{i=1} \tilde{X}_i \tilde{\Delta}_{n,i}^\top e^{t^\top \tilde{X}_i} e^{t^\top \tilde{\Delta}_{n,i}};\\
				J_{4n} &:= \frac{1}{n} \sum^n_{i=1} \tilde{X}_i \tilde{X}_i^\top e^{t^\top \tilde{X}_i} \left\{ t^\top \tilde{\Delta}_{n,i} + \frac{1}{2}(t^\top \tilde{\Delta}_{n,i})^2e^{\theta_{n,i}(t) t^\top \tilde{\Delta}_{n,i}} \right\}.
			\end{align*}
We  shall show that the spectral norms of each of the above terms all converge	to $0$ almost surely.		For $J_{1,n}$, we see that
			\begin{equation*}
				\|J_{1n}\|_2  \leq  \xi^2_n |\tilde{M}^{(n)}(t)| e^{\|t\|\xi_n}\stackrel{a.s.}{\rightarrow} 0.
			\end{equation*}
			For $J_{2n}$, we have
			\begin{equation*}
				\|J_{2n}\|_2 \leq \frac{1}{n}\sum^n_{i=1}\|\tilde{\Delta}_{n,i}\| \cdot \|\tilde{X}_i\| e^{t^\top \tilde{X}_i} e^{\|t\| \xi_n} \leq \xi_n e^{\|t\| \xi_n} \left( \frac{1}{n}\sum^n_{i=1}\|\tilde{X}_i\|e^{t^\top \tilde{X}_i}\right) \stackrel{a.s.}{\rightarrow} 0,
			\end{equation*}
		as $n^{-1}\sum^n_{i=1}\|\tilde{X}_i\| e^{t^\top \tilde{X}_i} \stackrel{a.s.}{\rightarrow} \mathbb{E}(\|\tilde{X}\|e^{t^\top \tilde{X}})$.
			As $J_{3n} = J_{2n}^\top$, we have $\|J_{3n}\|_2 \stackrel{a.s.}{\rightarrow} 0$. Finally, by the fact that $H_{\tilde{M}^{(n)}}(t) \stackrel{a.s.}{\rightarrow} H_{\tilde{M}}(t)$ and (\ref{eq:xi_n_limit}),
			\begin{equation*}
				\|J_{4n}\|_2 \leq \|H_{\tilde{M}^{(n)}}(t)\|_2 \left(\|t\| \xi_n + \frac{1}{2} \|t\|^2\xi_n^2 e^{\|t\| \xi_n} \right) \stackrel{a.s.}{\rightarrow} 0.
			\end{equation*}

	\end{proof}

	\begin{proof}[Proof of Theorem \ref{thm:consistency_limit}]
We only prove the multivariate case as the univariate case follows by changing the notation.	Simple algebra shows that
		\begin{align*}
		H_{\Lambda^{(n)}}(t) - \textbf{I}_p =	\frac{M^{(n)}(t) H_{M^{(n)}}(t) - \triangledown M^{(n)}(t) (\triangledown M^{(n)}(t))^\top}{(M^{(n)}(t))^2} - \textbf{I}_p =: Q_{1n}(t) + Q_{2n}(t) + Q_{3n}(t),
		\end{align*}
		where
		\begin{align*}
			Q_{1n}(t) &:= \frac{\tilde{M}^{(n)}(t) H_{\tilde{M}^{(n)}}(t) - \triangledown \tilde{M}^{(n)}(t) (\triangledown \tilde{M}^{(n)}(t))^\top}{(\tilde{M}^{(n)}(t))^2} - \textbf{I}_p;\\
			Q_{2n}(t) &:= \left\{M^{(n)}(t) H_{M^{(n)}}(t) - \triangledown M^{(n)}(t) (\triangledown M^{(n)}(t))^\top \right\}  \left\{ \frac{ (\tilde{M}^{(n)}(t))^2 - (M^{(n)}(t))^2 }{ (M^{(n)}(t) \tilde{M}^{(n)}(t))^2 } \right\}; \\
			Q_{3n}(t) &:=\frac{1}{(\tilde{M}^{(n)}(t))^2} 	\bigg\{M^{(n)}(t) H_{M^{(n)}}(t) - \triangledown M^{(n)}(t) (\triangledown M^{(n)}(t))^\top  \\
			& \quad \quad\quad- \tilde{M}^{(n)}(t) H_{\tilde{M}^{(n)}}(t) + \triangledown \tilde{M}^{(n)}(t) (\triangledown \tilde{M}^{(n)}(t))^\top \bigg\}.
		\end{align*}
		By the strong law of large numbers, 
		\begin{equation*}
			Q_{1n}(t) \stackrel{a.s.}{\rightarrow} \frac{\tilde{M}(t)H_{\tilde{M}}(t) - \triangledown \tilde{M}(t) (\triangledown \tilde{M}(t))^\top}{(\tilde{M}(t))^2} - \textbf{I}_p = H_{\tilde{\Lambda}}(t) - \textbf{I}_p.
		\end{equation*}
	According to Lemma \ref{lemma:alternative_M}, it is straightforward to show that $Q_{2n}(t) \stackrel{a.s.}{\rightarrow} 0$ and $Q_{3n}(t) \stackrel{a.s.}{\rightarrow} 0$. The claims in the theorem then follow from the continuous mapping theorem.
	\end{proof}

\setlength{\tabcolsep}{6pt} 
\renewcommand{\arraystretch}{0.5} 
\subsection{Details of Simulation Results for Univariate Test}
\begin{table}[H]
	\centering
	\begin{tabular}{lrrrrrrrrrrr}
		\hline
		$R$	& 0.5 & 1 & 2 & 3 & 4 & 5 & 6 & 7 & 8 & 9 & 10 \\ 
		\hline
		$N(0, 1)$ & 5 & 6 & 5 & 5 & 5 & 5 & 5 & 5 & 5 & 6 & 5 \\ 
		$N(0, 2)$ & 5 & 6 & 5 & 5 & 5 & 5 & 5 & 5 & 5 & 5 & 5 \\ 
		$N(2, 1)$ & 5 & 5 & 5 & 6 & 5 & 5 & 5 & 5 & 5 & 5 & 5 \\ 
		$N(2, 2)$ & 5 & 5 & 5 & 5 & 5 & 5 & 5 & 5 & 5 & 5 & 5 \\ 
		$U(0, 1)$ & 0 & 0 & 77 & 93 & 92 & 92 & 92 & 91 & 89 & 88 & 87 \\ 
		Beta$(0.5,0.5)$ & 0 & 0 & 100 & 100 & 100 & 100 & 100 & 100 & 100 & 100 & 100 \\ 
		Beta$(2,2)$ & 0 & 0 & 15 & 33 & 29 & 26 & 23 & 21 & 19 & 17 & 16 \\ 
		GLD$(0,1,0.25,0.25)$ & 1 & 0 & 2 & 6 & 5 & 4 & 4 & 4 & 3 & 3 & 3 \\ 
		GLD$(0,1,0.5,0.5)$ & 0 & 0 & 22 & 42 & 38 & 34 & 33 & 29 & 27 & 24 & 22 \\ 
		GLD$(0,1,0.75,0.75)$ & 0 & 0 & 59 & 81 & 80 & 78 & 76 & 73 & 70 & 67 & 66 \\ 
		GLD$(0,1,1.25,1.25)$ & 0 & 0 & 85 & 96 & 96 & 95 & 95 & 94 & 94 & 93 & 93 \\ 
		Trunc$(-2,2,0,1)$ & 0 & 0 & 4 & 11 & 9 & 8 & 8 & 7 & 7 & 6 & 5 \\ 
		Trunc$(-3,3,0,2)$ & 0 & 0 & 19 & 38 & 36 & 34 & 34 & 31 & 29 & 26 & 26 \\ 
		Trunc$(-2,2,0,2)$ & 0 & 0 & 48 & 74 & 73 & 70 & 69 & 67 & 65 & 62 & 60 \\ 
		Laplace & 43 & 47 & 41 & 36 & 36 & 38 & 40 & 41 & 42 & 42 & 41 \\ 
		Logistic & 22 & 24 & 20 & 17 & 18 & 19 & 19 & 20 & 20 & 20 & 19 \\ 
		Cauchy & 98 & 99 & 98 & 98 & 98 & 98 & 98 & 99 & 99 & 99 & 99 \\ 
		GLD$(0,1, -0.1, -0.1)$ & 39 & 43 & 39 & 34 & 35 & 35 & 37 & 37 & 38 & 37 & 38 \\ 
		GLD$(0,1, -0.15, -0.15)$ & 48 & 53 & 48 & 43 & 43 & 44 & 46 & 46 & 47 & 47 & 47 \\ 
		t$(5)$ & 37 & 40 & 37 & 32 & 32 & 33 & 35 & 35 & 36 & 36 & 36 \\ 
		t$(10)$ & 18 & 20 & 18 & 15 & 15 & 16 & 16 & 16 & 16 & 17 & 17 \\ 
		t$(15)$ & 13 & 14 & 12 & 10 & 10 & 10 & 11 & 12 & 12 & 12 & 12 \\ 
		Exp$(1)$ & 98 & 95 & 100 & 100 & 100 & 100 & 100 & 100 & 100 & 100 & 100 \\ 
		LogNormal$(0,0.5)$ & 88 & 81 & 87 & 87 & 85 & 85 & 84 & 83 & 83 & 83 & 82 \\ 
		Gamma$(4,5)$ & 62 & 51 & 60 & 59 & 57 & 57 & 56 & 55 & 54 & 54 & 53 \\ 
		Beta$(2,1)$ & 23 & 9 & 65 & 78 & 77 & 77 & 77 & 75 & 75 & 73 & 74 \\ 
		Beta$(3, 2)$ & 3 & 1 & 13 & 25 & 22 & 20 & 19 & 17 & 16 & 14 & 13 \\ 
		Weibull$(3, 1)$ & 4 & 3 & 4 & 6 & 6 & 5 & 6 & 5 & 5 & 5 & 5 \\ 
		Pareto$(1, 3)$ & 100 & 100 & 100 & 100 & 100 & 100 & 100 & 100 & 100 & 100 & 100 \\ 
		chisq$(4)$ & 87 & 78 & 89 & 90 & 90 & 88 & 89 & 89 & 89 & 88 & 87 \\ 
		chisq$(10)$ & 54 & 45 & 50 & 48 & 47 & 46 & 47 & 46 & 45 & 44 & 44 \\ 
		chisq$(20)$ & 33 & 26 & 27 & 26 & 26 & 25 & 26 & 25 & 25 & 24 & 24 \\ 
		ScConN$(0.2, 5)$ & 90 & 95 & 91 & 88 & 89 & 89 & 91 & 91 & 92 & 92 & 93 \\ 
		ScConN$(0.05, 5)$ & 68 & 70 & 69 & 67 & 67 & 68 & 68 & 67 & 68 & 69 & 69 \\ 
		LoConN$(0.5, 3)$ & 0 & 0 & 31 & 40 & 32 & 27 & 22 & 20 & 18 & 16 & 15 \\ 
		LoConN$(0.5, 2)$ & 1 & 1 & 5 & 9 & 8 & 7 & 6 & 6 & 5 & 5 & 4 \\ 
		\hline
	\end{tabular}
	\caption{Reject proportions of our proposed test $U^{(n)}_N$ in the univariate case when $n=50$ and $N=500$ with different values of $R$}
\end{table}

\begin{table}[H]
	\centering
	\begingroup
	\begin{tabular}{lrrrrrrrrrr}
		\hline
		N & 100 & 200 & 300 & 400 & 500 & 600 & 700 & 800 & 900 & 1000 \\ 
		\hline
		$N(0, 1)$ & 5 & 5 & 5 & 5 & 5 & 5 & 5 & 5 & 5 & 5 \\ 
		$N(0, 2)$ & 5 & 5 & 5 & 5 & 5 & 5 & 5 & 5 & 5 & 5 \\ 
		$N(2, 1)$ & 5 & 5 & 5 & 5 & 5 & 5 & 5 & 5 & 5 & 5 \\ 
		$N(2, 2)$ & 5 & 5 & 5 & 5 & 5 & 5 & 5 & 5 & 5 & 5 \\ 
		$U(0, 1)$ & 93 & 93 & 93 & 93 & 93 & 93 & 93 & 93 & 93 & 93 \\ 
		Beta$(0.5,0.5)$ & 100 & 100 & 100 & 100 & 100 & 100 & 100 & 100 & 100 & 100 \\ 
		Beta$(2,2)$ & 33 & 33 & 33 & 33 & 32 & 32 & 32 & 31 & 31 & 32 \\ 
		GLD$(0,1,0.25,0.25)$ & 7 & 7 & 7 & 6 & 6 & 7 & 6 & 6 & 6 & 6 \\ 
		GLD$(0,1,0.5,0.5)$ & 44 & 42 & 42 & 43 & 42 & 42 & 42 & 41 & 41 & 41 \\ 
		GLD$(0,1,0.75,0.75)$ & 81 & 81 & 81 & 81 & 80 & 81 & 80 & 80 & 80 & 80 \\ 
		GLD$(0,1,1.25,1.25)$ & 96 & 96 & 96 & 96 & 96 & 96 & 96 & 95 & 96 & 96 \\ 
		Trunc$(-2,2,0,1)$ & 12 & 11 & 11 & 11 & 11 & 11 & 10 & 10 & 11 & 10 \\ 
		Trunc$(-3,3,0,2)$ & 39 & 39 & 40 & 38 & 37 & 38 & 38 & 38 & 37 & 39 \\ 
		Trunc$(-2,2,0,2)$ & 74 & 73 & 74 & 74 & 73 & 74 & 73 & 73 & 72 & 73 \\ 
		Laplace & 36 & 36 & 36 & 36 & 36 & 35 & 36 & 35 & 35 & 34 \\ 
		Logistic & 18 & 18 & 17 & 18 & 16 & 17 & 18 & 17 & 17 & 18 \\ 
		Cauchy & 98 & 98 & 98 & 98 & 98 & 98 & 98 & 98 & 98 & 98 \\ 
		GLD$(0,1, -0.1, -0.1)$ & 34 & 34 & 34 & 34 & 34 & 33 & 34 & 33 & 34 & 34 \\ 
		GLD$(0,1, -0.15, -0.15)$ & 43 & 42 & 44 & 42 & 43 & 43 & 42 & 43 & 43 & 43 \\ 
		t$(5)$ & 32 & 33 & 31 & 32 & 32 & 32 & 32 & 32 & 32 & 31 \\ 
		t$(10)$ & 15 & 15 & 15 & 15 & 14 & 15 & 14 & 15 & 15 & 15 \\ 
		t$(15)$ & 11 & 10 & 10 & 10 & 10 & 11 & 10 & 10 & 10 & 10 \\ 
		Exp$(1)$ & 100 & 100 & 100 & 100 & 100 & 100 & 100 & 100 & 100 & 100 \\ 
		LogNormal$(0,0.5)$ & 88 & 87 & 87 & 87 & 87 & 87 & 87 & 87 & 87 & 87 \\ 
		Gamma$(4,5)$ & 60 & 59 & 60 & 60 & 59 & 60 & 59 & 58 & 58 & 59 \\ 
		Beta$(2,1)$ & 79 & 79 & 78 & 78 & 78 & 78 & 78 & 78 & 77 & 78 \\ 
		Beta$(3, 2)$ & 26 & 25 & 25 & 25 & 25 & 25 & 25 & 24 & 24 & 24 \\ 
		Weibull$(3, 1)$ & 7 & 7 & 7 & 7 & 6 & 6 & 6 & 6 & 6 & 6 \\ 
		Pareto$(1, 3)$ & 100 & 100 & 100 & 100 & 100 & 100 & 100 & 100 & 100 & 100 \\ 
		chisq$(4)$ & 91 & 90 & 90 & 91 & 90 & 91 & 90 & 90 & 90 & 90 \\ 
		chisq$(10)$ & 49 & 49 & 48 & 49 & 49 & 49 & 48 & 49 & 49 & 49 \\ 
		chisq$(20)$ & 27 & 26 & 27 & 27 & 27 & 27 & 25 & 26 & 27 & 27 \\ 
		ScConN$(0.2, 5)$ & 88 & 88 & 88 & 88 & 88 & 88 & 88 & 88 & 88 & 87 \\ 
		ScConN$(0.05, 5)$ & 67 & 67 & 67 & 67 & 66 & 67 & 67 & 67 & 67 & 67 \\ 
		LoConN$(0.5, 3)$ & 41 & 41 & 40 & 40 & 40 & 39 & 40 & 40 & 39 & 39 \\ 
		LoConN$(0.5, 2)$ & 10 & 10 & 9 & 10 & 9 & 10 & 9 & 9 & 9 & 9 \\ 
		\hline
	\end{tabular}
	\endgroup
	\caption{Reject proportions of our proposed test $U^{(n)}_N$ in the univariate case when $n=50$ and $R=3$ with different values of $N$}
\end{table}

\begin{table}[H]
	\centering
	\begin{tabular}{lrrrrrrrr}
		\hline
		& $U^{(25)}_{2,500}$ & $U^{(25)}_{3,500}$ & $U^{(25)}_{4,500}$ & CvM & AD & SW & JB & HV \\ 
		\hline
		$N(0, 1)$ & 5 & 5 & 5 & 5 & 5 & 5 & 3 & 5 \\ 
		$N(0, 2)$ & 5 & 5 & 5 & 5 & 5 & 5 & 3 & 5 \\ 
		$N(2, 1)$ & 5 & 5 & 5 & 5 & 5 & 5 & 3 & 5 \\ 
		$N(2, 2)$ & 5 & 5 & 5 & 5 & 5 & 5 & 3 & 5 \\ 
		$U(0, 1)$ & 31 & 47 & 47 & 18 & 23 & 29 & 0 & 0 \\ 
		Beta$(0.5,0.5)$ & 85 & 94 & 95 & 64 & 76 & 86 & 0 & 0 \\ 
		Beta$(2,2)$ & 6 & 12 & 10 & 6 & 7 & 6 & 0 & 0 \\ 
		GLD$(0,1,0.25,0.25)$ & 3 & 4 & 4 & 4 & 4 & 4 & 0 & 1 \\ 
		GLD$(0,1,0.5,0.5)$ & 8 & 14 & 13 & 7 & 8 & 8 & 0 & 0 \\ 
		GLD$(0,1,0.75,0.75)$ & 19 & 32 & 32 & 13 & 16 & 18 & 0 & 0 \\ 
		GLD$(0,1,1.25,1.25)$ & 37 & 54 & 54 & 22 & 27 & 35 & 0 & 0 \\ 
		Trunc$(-2,2,0,1)$ & 2 & 5 & 5 & 4 & 4 & 4 & 0 & 1 \\ 
		Trunc$(-3,3,0,2)$ & 6 & 12 & 12 & 6 & 6 & 6 & 0 & 0 \\ 
		Trunc$(-2,2,0,2)$ & 15 & 27 & 25 & 10 & 12 & 14 & 0 & 0 \\ 
		Laplace & 27 & 24 & 23 & 32 & 32 & 32 & 28 & 32 \\ 
		Logistic & 14 & 13 & 12 & 11 & 11 & 13 & 12 & 16 \\ 
		Cauchy & 87 & 83 & 85 & 94 & 94 & 93 & 90 & 89 \\ 
		GLD$(0,1, -0.1, -0.1)$ & 24 & 21 & 21 & 20 & 22 & 24 & 24 & 27 \\ 
		GLD$(0,1, -0.15, -0.15)$ & 30 & 26 & 26 & 27 & 29 & 31 & 30 & 34 \\ 
		t$(5)$ & 23 & 20 & 19 & 17 & 19 & 22 & 22 & 25 \\ 
		t$(10)$ & 11 & 10 & 10 & 8 & 9 & 11 & 10 & 13 \\ 
		t$(15)$ & 9 & 8 & 8 & 7 & 8 & 9 & 7 & 10 \\ 
		Exp$(1)$ & 88 & 88 & 88 & 84 & 88 & 93 & 63 & 74 \\ 
		LogNormal$(0,0.5)$ & 60 & 59 & 57 & 53 & 57 & 64 & 44 & 54 \\ 
		Gamma$(4,5)$ & 34 & 34 & 33 & 27 & 30 & 36 & 22 & 31 \\ 
		Beta$(2,1)$ & 30 & 38 & 40 & 30 & 34 & 41 & 4 & 8 \\ 
		Beta$(3, 2)$ & 7 & 11 & 10 & 8 & 9 & 9 & 1 & 2 \\ 
		Weibull$(3, 1)$ & 4 & 5 & 5 & 5 & 5 & 5 & 2 & 3 \\ 
		Pareto$(1, 3)$ & 97 & 98 & 98 & 96 & 97 & 98 & 86 & 91 \\ 
		chisq$(4)$ & 60 & 60 & 61 & 52 & 57 & 65 & 39 & 50 \\ 
		chisq$(10)$ & 29 & 28 & 28 & 22 & 25 & 31 & 18 & 26 \\ 
		chisq$(20)$ & 17 & 16 & 16 & 13 & 15 & 17 & 11 & 17 \\ 
		ScConN$(0.2, 5)$ & 68 & 64 & 64 & 77 & 80 & 80 & 75 & 74 \\ 
		ScConN$(0.05, 5)$ & 43 & 42 & 42 & 36 & 39 & 42 & 42 & 44 \\ 
		LoConN$(0.5, 3)$ & 15 & 21 & 18 & 21 & 20 & 17 & 0 & 0 \\ 
		LoConN$(0.5, 2)$ & 4 & 6 & 6 & 6 & 5 & 5 & 1 & 1 \\ 
		\hline
	\end{tabular}
	\caption{Reject proportions of our proposed test $U^{(n)}_{N,R}$ ($R=2, 3, 4$, $N = 500$) and other tests in the univariate case when $n=25$}
\end{table}

\begin{table}[H]
	\centering
	\begin{tabular}{lrrrrrrrr}
		\hline
		& $U^{50}_{2,500}$ & $U^{50}_{3,500}$ & $U^{50}_{4,500}$ & CvM & AD & SW & JB & HV \\ 
		\hline
		$N(0, 1)$ & 5 & 5 & 5 & 5 & 5 & 5 & 4 & 5 \\ 
		$N(0, 2)$ & 5 & 5 & 5 & 6 & 5 & 5 & 4 & 5 \\ 
		$N(2, 1)$ & 5 & 6 & 5 & 5 & 5 & 5 & 4 & 5 \\ 
		$N(2, 2)$ & 5 & 5 & 5 & 5 & 5 & 5 & 4 & 5 \\ 
		$U(0, 1)$ & 78 & 93 & 92 & 44 & 57 & 76 & 0 & 0 \\ 
		Beta$(0.5,0.5)$ & 100 & 100 & 100 & 96 & 99 & 100 & 3 & 0 \\ 
		Beta$(2,2)$ & 16 & 33 & 29 & 12 & 14 & 16 & 0 & 0 \\ 
		GLD$(0,1,0.25,0.25)$ & 3 & 6 & 5 & 5 & 5 & 4 & 0 & 1 \\ 
		GLD$(0,1,0.5,0.5)$ & 23 & 42 & 38 & 14 & 17 & 21 & 0 & 0 \\ 
		GLD$(0,1,0.75,0.75)$ & 59 & 81 & 80 & 31 & 40 & 55 & 0 & 0 \\ 
		GLD$(0,1,1.25,1.25)$ & 85 & 96 & 96 & 52 & 66 & 83 & 0 & 0 \\ 
		Trunc$(-2,2,0,1)$ & 4 & 11 & 9 & 5 & 5 & 5 & 0 & 0 \\ 
		Trunc$(-3,3,0,2)$ & 17 & 38 & 36 & 10 & 12 & 16 & 0 & 0 \\ 
		Trunc$(-2,2,0,2)$ & 48 & 74 & 73 & 22 & 31 & 45 & 0 & 0 \\ 
		Laplace & 42 & 36 & 36 & 54 & 55 & 53 & 52 & 47 \\ 
		Logistic & 21 & 17 & 18 & 14 & 16 & 20 & 23 & 23 \\ 
		Cauchy & 99 & 98 & 98 & 100 & 100 & 100 & 99 & 99 \\ 
		GLD$(0,1, -0.1, -0.1)$ & 39 & 34 & 35 & 32 & 35 & 40 & 44 & 42 \\ 
		GLD$(0,1, -0.15, -0.15)$ & 48 & 43 & 43 & 44 & 48 & 51 & 54 & 52 \\ 
		t$(5)$ & 37 & 32 & 32 & 26 & 30 & 35 & 40 & 39 \\ 
		t$(10)$ & 18 & 15 & 15 & 11 & 12 & 16 & 18 & 19 \\ 
		t$(15)$ & 12 & 10 & 10 & 8 & 8 & 10 & 12 & 13 \\ 
		Exp$(1)$ & 100 & 100 & 100 & 99 & 100 & 100 & 96 & 96 \\ 
		LogNormal$(0,0.5)$ & 87 & 87 & 85 & 83 & 87 & 92 & 79 & 84 \\ 
		Gamma$(4,5)$ & 60 & 59 & 57 & 52 & 59 & 69 & 50 & 57 \\ 
		Beta$(2,1)$ & 66 & 78 & 77 & 62 & 72 & 83 & 10 & 14 \\ 
		Beta$(3, 2)$ & 13 & 25 & 22 & 15 & 17 & 20 & 1 & 2 \\ 
		Weibull$(3, 1)$ & 5 & 6 & 6 & 6 & 6 & 6 & 2 & 4 \\ 
		Pareto$(1, 3)$ & 100 & 100 & 100 & 100 & 100 & 100 & 100 & 100 \\ 
		chisq$(4)$ & 90 & 90 & 90 & 84 & 89 & 95 & 77 & 82 \\ 
		chisq$(10)$ & 49 & 48 & 47 & 42 & 48 & 58 & 42 & 49 \\ 
		chisq$(20)$ & 28 & 26 & 26 & 23 & 26 & 33 & 24 & 30 \\ 
		ScConN$(0.2, 5)$ & 91 & 88 & 89 & 95 & 97 & 97 & 97 & 93 \\ 
		ScConN$(0.05, 5)$ & 69 & 67 & 67 & 55 & 59 & 66 & 68 & 69 \\ 
		LoConN$(0.5, 3)$ & 30 & 40 & 32 & 44 & 44 & 37 & 0 & 0 \\ 
		LoConN$(0.5, 2)$ & 5 & 9 & 8 & 7 & 8 & 6 & 0 & 1 \\ 
		\hline
	\end{tabular}
	\caption{Reject proportions of our proposed test $U^{(n)}_{R,N}$ ($R=2,3,4$, $N=500$) and other tests in the univariate case when $n=50$}
\end{table}

\begin{table}[H]
	\centering
	\begin{tabular}{lrrrrrrrr}
		\hline
		& $U^{(100)}_{2,500}$ & $U^{(100)}_{3,500}$ & $U^{(100)}_{4,500}$ & CvM & AD & SW & JB & HV \\ 
		\hline
		$N(0, 1)$ & 5 & 5 & 5 & 5 & 5 & 5 & 4 & 5 \\ 
		$N(0, 2)$ & 5 & 5 & 5 & 5 & 5 & 5 & 4 & 5 \\ 
		$N(2, 1)$ & 5 & 5 & 5 & 5 & 5 & 5 & 4 & 5 \\ 
		$N(2, 2)$ & 5 & 5 & 5 & 5 & 5 & 5 & 4 & 5 \\ 
		$U(0, 1)$ & 99 & 100 & 100 & 84 & 95 & 100 & 56 & 0 \\ 
		Beta$(0.5,0.5)$ & 100 & 100 & 100 & 100 & 100 & 100 & 100 & 0 \\ 
		Beta$(2,2)$ & 42 & 77 & 73 & 25 & 31 & 45 & 2 & 0 \\ 
		GLD$(0,1,0.25,0.25)$ & 3 & 11 & 10 & 6 & 6 & 6 & 0 & 0 \\ 
		GLD$(0,1,0.5,0.5)$ & 58 & 88 & 86 & 31 & 42 & 60 & 3 & 0 \\ 
		GLD$(0,1,0.75,0.75)$ & 96 & 100 & 100 & 66 & 83 & 96 & 30 & 0 \\ 
		GLD$(0,1,1.25,1.25)$ & 100 & 100 & 100 & 90 & 98 & 100 & 69 & 0 \\ 
		Trunc$(-2,2,0,1)$ & 8 & 35 & 33 & 7 & 9 & 13 & 0 & 0 \\ 
		Trunc$(-3,3,0,2)$ & 51 & 88 & 88 & 22 & 32 & 55 & 2 & 0 \\ 
		Trunc$(-2,2,0,2)$ & 91 & 100 & 100 & 53 & 71 & 92 & 17 & 0 \\ 
		Laplace & 61 & 54 & 55 & 82 & 83 & 80 & 79 & 66 \\ 
		Logistic & 31 & 25 & 26 & 21 & 24 & 31 & 37 & 33 \\ 
		Cauchy & 100 & 100 & 100 & 100 & 100 & 100 & 100 & 100 \\ 
		GLD$(0,1, -0.1, -0.1)$ & 58 & 50 & 52 & 53 & 57 & 62 & 68 & 60 \\ 
		GLD$(0,1, -0.15, -0.15)$ & 69 & 63 & 65 & 69 & 72 & 76 & 79 & 71 \\ 
		t$(5)$ & 56 & 50 & 51 & 43 & 48 & 57 & 63 & 57 \\ 
		t$(10)$ & 27 & 22 & 22 & 14 & 17 & 24 & 29 & 28 \\ 
		t$(15)$ & 17 & 14 & 14 & 9 & 10 & 14 & 17 & 18 \\ 
		Exp$(1)$ & 100 & 100 & 100 & 100 & 100 & 100 & 100 & 100 \\ 
		LogNormal$(0,0.5)$ & 99 & 99 & 99 & 99 & 99 & 100 & 99 & 99 \\ 
		Gamma$(4,5)$ & 86 & 88 & 86 & 84 & 89 & 96 & 86 & 88 \\ 
		Beta$(2,1)$ & 96 & 99 & 99 & 94 & 98 & 100 & 74 & 42 \\ 
		Beta$(3, 2)$ & 29 & 60 & 56 & 32 & 39 & 53 & 5 & 3 \\ 
		Weibull$(3, 1)$ & 5 & 10 & 9 & 7 & 7 & 8 & 3 & 5 \\ 
		Pareto$(1, 3)$ & 100 & 100 & 100 & 100 & 100 & 100 & 100 & 100 \\ 
		chisq$(4)$ & 100 & 100 & 100 & 99 & 100 & 100 & 99 & 99 \\ 
		chisq$(10)$ & 75 & 77 & 74 & 73 & 80 & 90 & 78 & 81 \\ 
		chisq$(20)$ & 43 & 43 & 41 & 42 & 48 & 60 & 48 & 54 \\ 
		ScConN$(0.2, 5)$ & 99 & 99 & 99 & 100 & 100 & 100 & 100 & 100 \\ 
		ScConN$(0.05, 5)$ & 89 & 88 & 88 & 75 & 80 & 88 & 89 & 89 \\ 
		LoConN$(0.5, 3)$ & 53 & 64 & 52 & 81 & 82 & 75 & 12 & 0 \\ 
		LoConN$(0.5, 2)$ & 6 & 16 & 12 & 13 & 13 & 11 & 0 & 1 \\ 
		\hline
	\end{tabular}
	\caption{Reject proportions of our proposed test $U^{(n)}_{R,N}$ ($R=2,3,4$, $N=500$) and other tests in the univariate case when $n=100$}
\end{table}

\begin{table}[H]
	\centering
	\begin{tabular}{lrrrrrrrr}
		\hline
		& $U^{(200)}_{2,500}$ & $U^{(200)}_{3,500}$ & $U^{(200)}_{4,500}$ & CvM & AD & SW & JB & HV \\ 
		\hline
		$N(0, 1)$ & 5 & 5 & 5 & 5 & 5 & 5 & 5 & 5 \\ 
		$N(0, 2)$ & 5 & 5 & 5 & 5 & 5 & 5 & 5 & 5 \\ 
		$N(2, 1)$ & 5 & 5 & 5 & 5 & 5 & 5 & 5 & 5 \\ 
		$N(2, 2)$ & 5 & 5 & 5 & 5 & 5 & 5 & 5 & 5 \\ 
		$U(0, 1)$ & 100 & 100 & 100 & 100 & 100 & 100 & 100 & 0 \\ 
		Beta$(0.5,0.5)$ & 100 & 100 & 100 & 100 & 100 & 100 & 100 & 1 \\ 
		Beta$(2,2)$ & 82 & 100 & 100 & 55 & 71 & 92 & 62 & 0 \\ 
		GLD$(0,1,0.25,0.25)$ & 3 & 24 & 20 & 10 & 11 & 12 & 2 & 0 \\ 
		GLD$(0,1,0.5,0.5)$ & 94 & 100 & 100 & 69 & 85 & 98 & 80 & 0 \\ 
		GLD$(0,1,0.75,0.75)$ & 100 & 100 & 100 & 97 & 100 & 100 & 100 & 0 \\ 
		GLD$(0,1,1.25,1.25)$ & 100 & 100 & 100 & 100 & 100 & 100 & 100 & 0 \\ 
		Trunc$(-2,2,0,1)$ & 20 & 85 & 87 & 14 & 19 & 44 & 8 & 0 \\ 
		Trunc$(-3,3,0,2)$ & 91 & 100 & 100 & 52 & 74 & 98 & 67 & 0 \\ 
		Trunc$(-2,2,0,2)$ & 100 & 100 & 100 & 91 & 99 & 100 & 99 & 0 \\ 
		Laplace & 81 & 74 & 76 & 98 & 98 & 98 & 96 & 86 \\ 
		Logistic & 45 & 37 & 38 & 34 & 39 & 49 & 58 & 47 \\ 
		Cauchy & 100 & 100 & 100 & 100 & 100 & 100 & 100 & 100 \\ 
		GLD$(0,1, -0.1, -0.1)$ & 79 & 73 & 73 & 80 & 83 & 87 & 90 & 81 \\ 
		GLD$(0,1, -0.15, -0.15)$ & 89 & 85 & 85 & 92 & 94 & 95 & 96 & 91 \\ 
		t$(5)$ & 76 & 69 & 71 & 68 & 74 & 81 & 86 & 77 \\ 
		t$(10)$ & 38 & 31 & 32 & 20 & 24 & 35 & 44 & 37 \\ 
		t$(15)$ & 24 & 19 & 19 & 12 & 13 & 21 & 27 & 24 \\ 
		Exp$(1)$ & 100 & 100 & 100 & 100 & 100 & 100 & 100 & 100 \\ 
		LogNormal$(0,0.5)$ & 100 & 100 & 100 & 100 & 100 & 100 & 100 & 100 \\ 
		Gamma$(4,5)$ & 99 & 100 & 99 & 99 & 100 & 100 & 100 & 100 \\ 
		Beta$(2,1)$ & 100 & 100 & 100 & 100 & 100 & 100 & 100 & 94 \\ 
		Beta$(3, 2)$ & 62 & 95 & 92 & 66 & 80 & 95 & 65 & 12 \\ 
		Weibull$(3, 1)$ & 5 & 17 & 15 & 10 & 11 & 16 & 6 & 7 \\ 
		Pareto$(1, 3)$ & 100 & 100 & 100 & 100 & 100 & 100 & 100 & 100 \\ 
		chisq$(4)$ & 100 & 100 & 100 & 100 & 100 & 100 & 100 & 100 \\ 
		chisq$(10)$ & 96 & 97 & 96 & 97 & 99 & 100 & 99 & 99 \\ 
		chisq$(20)$ & 63 & 67 & 64 & 73 & 80 & 90 & 85 & 85 \\ 
		ScConN$(0.2, 5)$ & 100 & 100 & 100 & 100 & 100 & 100 & 100 & 100 \\ 
		ScConN$(0.05, 5)$ & 99 & 98 & 99 & 93 & 95 & 98 & 99 & 99 \\ 
		LoConN$(0.5, 3)$ & 73 & 83 & 70 & 99 & 99 & 98 & 86 & 0 \\ 
		LoConN$(0.5, 2)$ & 6 & 24 & 18 & 25 & 27 & 23 & 6 & 0 \\ 
		\hline
	\end{tabular}
	\caption{Reject proportions of our proposed test $U^{(n)}_{R,N}$ ($R=2,3,4$, $N=500$) and other tests in the univariate case when $n=200$}
\end{table}

\subsection{Details Simulation Results for Multivariate Test}
In Table \ref{table:multi_diff_R}, the empirical reject proportions of our test with different values of $R$ are shown when $n = 50, p = 3, N = 500$. The results show that $R = 3$ tend to perform well in different cases. In Tables \ref{table:multi_diff_N_p3} and \ref{table:multi_diff_N_p10}, the empirical reject proportions of our test with different values of $N$ are shown when $n = 50, p = 3,R = 3$ and $n = 50, p = 10, R = 3$, respectively.
In Tables \ref{table:n25p2} to \ref{table:n100p10}, we present the empirical reject proportions of our test $T^{(n)}_{N}$ together with the energy test of \cite{szekely2005new}, the Henze-Visagie (HV) test (defined in (\ref{eq:HV})), the Henze–Jim\'enez-Gamero (HJ) test (\cite{henze2019new}, the Henze-Zirkler (HZ) test (\cite{henze1990class}) and the Mardia's test (\cite{mardia1970measures}) based on skewness (MS) and kurtosis (MK) for different sample sizes $n$ and dimensions $p$. The settings include $n = 25, 50, 100$ and $p = 2, 3, 5, 10$, where $N = 500$ and $R = 3$. We also include the corresponding tests using only $H^{(n)}_N$ and $D^{(n)}_N$ defined in (\ref{eq:Hn_Dn_t}). For normal distributions, $m$ denotes the vector $(1,\ldots,p)$ and $S$ denotes the matrix with diagonal elements being $1$ and off-diagonal elements being $0.5$.

\begin{table}[H]
\centering
\begingroup\footnotesize

	\endgroup
	\caption{$n = 100, p = 10$} 
	\label{table:n100p10}
\end{table}

%
%
%
%
%
%
%

\bibliographystyle{Chicago}
\bibliography{normal_test_reference}

\begin{thebibliography}{}

\bibitem[\protect\citeauthoryear{Anderson and Darling}{Anderson and
  Darling}{1954}]{anderson1954test}
Anderson, T.~W. and D.~A. Darling (1954).
\newblock A test of goodness of fit.
\newblock {\em Journal of the American statistical association\/}~{\em
  49\/}(268), 765--769.

\bibitem[\protect\citeauthoryear{Barndorff-Nielsen}{Barndorff-Nielsen}{1963}]{barndorff1963limit}
Barndorff-Nielsen, O. (1963).
\newblock On the limit behaviour of extreme order statistics.
\newblock {\em The Annals of Mathematical Statistics\/}~{\em 34\/}(3),
  992--1002.

\bibitem[\protect\citeauthoryear{Cram{\'e}r}{Cram{\'e}r}{1928}]{cramer1928composition}
Cram{\'e}r, H. (1928).
\newblock On the composition of elementary errors: First paper: Mathematical
  deductions.
\newblock {\em Scandinavian Actuarial Journal\/}~{\em 1928\/}(1), 13--74.

\bibitem[\protect\citeauthoryear{Eaton and Perlman}{Eaton and
  Perlman}{1973}]{eaton1973non}
Eaton, M.~L. and M.~D. Perlman (1973).
\newblock The non-singularity of generalized sample covariance matrices.
\newblock {\em The Annals of Statistics\/}, 710--717.

\bibitem[\protect\citeauthoryear{Ebner and Henze}{Ebner and
  Henze}{2020}]{ebner2020tests}
Ebner, B. and N.~Henze (2020).
\newblock Tests for multivariate normality—a critical review with emphasis on
  weighted $l^2$-statistics.
\newblock {\em Test\/}~{\em 29\/}(4), 845--892.

\bibitem[\protect\citeauthoryear{Gan and Koehler}{Gan and
  Koehler}{1990}]{gan1990goodness}
Gan, F.~F. and K.~J. Koehler (1990).
\newblock Goodness-of-fit tests based on p-p probability plots.
\newblock {\em Technometrics\/}~{\em 32\/}(3), 289--303.

\bibitem[\protect\citeauthoryear{Hastings~Jr, Mosteller, Tukey, and
  Winsor}{Hastings~Jr et~al.}{1947}]{hastings1947low}
Hastings~Jr, C., F.~Mosteller, J.~W. Tukey, and C.~P. Winsor (1947).
\newblock Low moments for small samples: a comparative study of order
  statistics.
\newblock {\em The Annals of Mathematical Statistics\/}~{\em 18\/}(3),
  413--426.

\bibitem[\protect\citeauthoryear{Henze and Jim{\'e}nez-Gamero}{Henze and
  Jim{\'e}nez-Gamero}{2019}]{henze2019new}
Henze, N. and M.~D. Jim{\'e}nez-Gamero (2019).
\newblock A new class of tests for multinormality with iid and garch data based
  on the empirical moment generating function.
\newblock {\em Test\/}~{\em 28\/}(2), 499--521.

\bibitem[\protect\citeauthoryear{Henze, Jim{\'e}nez-Gamero, and
  Meintanis}{Henze et~al.}{2019}]{henze2019characterizations}
Henze, N., M.~D. Jim{\'e}nez-Gamero, and S.~G. Meintanis (2019).
\newblock Characterizations of multinormality and corresponding tests of fit,
  including for garch models.
\newblock {\em Econometric Theory\/}~{\em 35\/}(3), 510--546.

\bibitem[\protect\citeauthoryear{Henze and Visagie}{Henze and
  Visagie}{2020}]{henze2020testing}
Henze, N. and J.~Visagie (2020).
\newblock Testing for normality in any dimension based on a partial
  differential equation involving the moment generating function.
\newblock {\em Annals of the Institute of Statistical Mathematics\/}~{\em
  72\/}(5), 1109--1136.

\bibitem[\protect\citeauthoryear{Henze and Wagner}{Henze and
  Wagner}{1997}]{henze1997new}
Henze, N. and T.~Wagner (1997).
\newblock A new approach to the bhep tests for multivariate normality.
\newblock {\em Journal of Multivariate Analysis\/}~{\em 62\/}(1), 1--23.

\bibitem[\protect\citeauthoryear{Henze and Zirkler}{Henze and
  Zirkler}{1990}]{henze1990class}
Henze, N. and B.~Zirkler (1990).
\newblock A class of invariant consistent tests for multivariate normality.
\newblock {\em Communications in statistics-Theory and Methods\/}~{\em
  19\/}(10), 3595--3617.

\bibitem[\protect\citeauthoryear{Jarque and Bera}{Jarque and
  Bera}{1987}]{jarque1987test}
Jarque, C.~M. and A.~K. Bera (1987).
\newblock A test for normality of observations and regression residuals.
\newblock {\em International Statistical Review/Revue Internationale de
  Statistique\/}, 163--172.

\bibitem[\protect\citeauthoryear{Kallenberg}{Kallenberg}{2021}]{kallenberg2021foundations}
Kallenberg, O. (2021).
\newblock {\em Foundations of modern probability}.
\newblock Springer.

\bibitem[\protect\citeauthoryear{Mardia}{Mardia}{1970}]{mardia1970measures}
Mardia, K.~V. (1970).
\newblock Measures of multivariate skewness and kurtosis with applications.
\newblock {\em Biometrika\/}~{\em 57\/}(3), 519--530.

\bibitem[\protect\citeauthoryear{Mecklin and Mundfrom}{Mecklin and
  Mundfrom}{2004}]{mecklin2004appraisal}
Mecklin, C.~J. and D.~J. Mundfrom (2004).
\newblock An appraisal and bibliography of tests for multivariate normality.
\newblock {\em International Statistical Review\/}~{\em 72\/}(1), 123--138.

\bibitem[\protect\citeauthoryear{Mises}{Mises}{1931}]{mises1931wahrscheinlichkeitsrechnung}
Mises, R. (1931).
\newblock {\em Wahrscheinlichkeitsrechnung und ihre Anwendung in der Statistik
  und theoretischen Physik}.
\newblock Mary S. Rosenberg.

\bibitem[\protect\citeauthoryear{Nelsen}{Nelsen}{2006}]{nelsen2007introduction}
Nelsen, R.~B. (2006).
\newblock {\em An introduction to copulas}.
\newblock Springer science \& business media.

\bibitem[\protect\citeauthoryear{Ramberg and Schmeiser}{Ramberg and
  Schmeiser}{1974}]{ramberg1974approximate}
Ramberg, J.~S. and B.~W. Schmeiser (1974).
\newblock An approximate method for generating asymmetric random variables.
\newblock {\em Communications of the ACM\/}~{\em 17\/}(2), 78--82.

\bibitem[\protect\citeauthoryear{Shapiro and Wilk}{Shapiro and
  Wilk}{1965}]{shapiro1965analysis}
Shapiro, S.~S. and M.~B. Wilk (1965).
\newblock An analysis of variance test for normality (complete samples).
\newblock {\em Biometrika\/}~{\em 52\/}(3/4), 591--611.

\bibitem[\protect\citeauthoryear{Smirnov}{Smirnov}{1936}]{smirnov1936sui}
Smirnov, N. (1936).
\newblock Sui la distribution de w2 (criterium de mrv mises).
\newblock {\em Comptes Rendus (Paris)\/}~{\em 202}, 449--452.

\bibitem[\protect\citeauthoryear{Sz{\'e}kely and Rizzo}{Sz{\'e}kely and
  Rizzo}{2005}]{szekely2005new}
Sz{\'e}kely, G.~J. and M.~L. Rizzo (2005).
\newblock A new test for multivariate normality.
\newblock {\em Journal of Multivariate Analysis\/}~{\em 93\/}(1), 58--80.

\bibitem[\protect\citeauthoryear{Szkutnik}{Szkutnik}{2021}]{szkutnik2021comment}
Szkutnik, Z. (2021).
\newblock A comment on affine invariance and ancillarity in testing
  multivariate normality.
\newblock {\em Journal of Statistical Theory and Practice\/}~{\em 15\/}(2),
  1--4.

\bibitem[\protect\citeauthoryear{Thode}{Thode}{2002}]{thode2002testing}
Thode, H.~C. (2002).
\newblock {\em Testing for normality}.
\newblock CRC press.

\bibitem[\protect\citeauthoryear{Yap and Sim}{Yap and
  Sim}{2011}]{yap2011comparisons}
Yap, B.~W. and C.~H. Sim (2011).
\newblock Comparisons of various types of normality tests.
\newblock {\em Journal of Statistical Computation and Simulation\/}~{\em
  81\/}(12), 2141--2155.

\end{thebibliography}

\end{document}